\newcommand{\bbC}{\mathcal{C}_0\mathcal{M}} 
\newcommand{\CCM}{C_0M}
\newcommand{\inv}{^{-1}}
\newcommand{\DCPO}{\mathbf{DCPO}}
\newcommand{\To}{\Rightarrow}
\newcommand{\Pow}{\mathcal{P}}
\newcommand{\Nat}{\mathbb{N}}
\newcommand{\Dist}{\mathcal{D}} 
\newcommand{\Ke}{\mathop{\mathsf{Ker}}}
\newcounter{blubber}
\newenvironment{eenumerate}
{\begin{list}
  {(\roman{blubber})}
  {\usecounter{blubber}
   \setlength{\leftmargin}{0pt}
    \setlength{\parsep}{0pt}
    \setlength{\itemindent}{4ex}
    \setlength{\itemsep}{2pt}
  }
}
{\end{list}}
\newcommand{\SG}[2][]{\ed@note{#2}{SG}{#1}}
\newcommand{\LS}[2][]{\ed@note{#2}{LS}{#1}}
\newcommand{\DP}[2][]{\ed@note{#2}{DP}{#1}}
\newcommand*{\@old@slash}{}\let\@old@slash\slash
\def\slash{\relax\ifmmode\delimiter"502F30E\mathopen{}\else\@old@slash\fi}
\newcommand{\BBT}{\mathbb{T}}
\newcommand{\BBS}{\mathbb{S}}
\newcommand{\BBN}{\mathbb{N}}
\newcommand{\Set}{\Cat{Set}}
\newcommand{\Cat}{\mathbf}
\newcommand{\Hom}{\mathsf{Hom}}
\newcommand{\id}{\operatorname{id}}
\newcommand{\PSet}{{\mathcal P}}
\newcommand{\PFin}{{\mathcal P}_{\omega}}
\newcommand{\impl}{\Rightarrow}
\newcommand{\eps}{\operatorname\epsilon}
\newcommand{\comp}{\mathbin{\circ}}
\newcommand{\tpair}[2]{{#1\wideparen{\;}\kern-1pt #2}}
\newcommand{\argument}{{\operatorname{-\!-\!-}}}
\newcommand{\CASE}{\operatorname{\sf case}}
\newcommand{\OF}{\operatorname{\sf of}}
\renewcommand{\case}[3]{\CASE\kern1.2pt #1\kern1.2pt \OF\kern1.2pt #2;\kern1.2pt #3}
\newcommand{\caseOne}[2]{\CASE\kern1.2pt #1\kern1.2pt \OF\kern1.2pt #2}
\newcommand{\DO}{\operatorname{\sf do}}
\newcommand{\letTerm}[2]{\DO\kern1.2pt#1; #2}
\newcommand{\leteq}{\gets}
\newcommand{\letTermO}[2]{\DO_{\nu}\kern1.2pt#1; #2}
\newcommand{\LOOP}{\operatorname{\sf loop}}
\newcommand{\THEN}{\operatorname{\sf then}}
\long\def\loopTerm@[#1][#2][#3]#4#5#6{
\DO\kern1.2pt#4 #1 \LOOP #3 #5 #2\THEN #6
}
\newcommand{\loopTerm}{
\optparams{\loopTerm@}{[;][\};][\{]}
}
\newcommand{\LET}{\operatorname{\sf init}}
\newcommand{\letLoop}[2]{\LET #1\kern1.2pt\LOOP\kern1.2pt\{#2\}}
\newcommand{\LetLoop}[2]{\LET #1\kern1.2pt\LOOP\kern1.2pt\bigl\{#2\bigr\}}
\newcommand{\IF}{\operatorname{\sf if}}
\newcommand{\ifTerm}[3]{\IF #1\kern2.2pt {\sf then}\kern1.2pt #2\kern2.2pt {\sf else}\kern2.2pt #3}
\newcommand{\ifTermO}[3]{\IF_{\nu} #1\kern2.2pt {\sf then}\kern2.2pt #2\kern2.2pt {\sf else}\kern2.2pt #3}
\newcommand{\WHILE}{\operatorname{\sf while}}
\newcommand{\whileTerm}[3]{\LET #1\kern1.2pt\WHILE\kern1.2pt #2 \kern2.2pt{\sf do}\kern2.2pt #3}
\newcommand{\whileTermS}[2]{\WHILE\kern1.2pt #1 \kern2.2pt{\sf do}\kern2.2pt #2}
\newcommand{\SEQ}{\operatorname{\sf seq}}
\newcommand{\seqTerm}[2]{\SEQ_{\nu}\kern1.2pt#1; #2}
\newcommand{\letin}[2]{\operatorname{\sf let} #1\kern2.2pt \operatorname{\sf in}\kern1.2pt #2}
\newcommand{\match}[2]{\operatorname{\sf let}\kern2.2pt #1\kern2.2pt \operatorname{\sf in}\kern1.2pt #2}
\newcommand{\letmon}[3]{\operatorname{\sf let}\kern1.2pt  #1\kern2.2pt \operatorname{\sf by} #2\kern2.2pt \operatorname{\sf in}\kern1.2pt #3}
\newcommand{\lettens}[4]{\operatorname{\sf let}\kern1.2pt  #3\kern2.2pt \operatorname{\sf by}  #1 (#2) \kern2.2pt \operatorname{\sf in}\kern1.2pt #4}
\newcommand{\lsem}{\llbracket}
\newcommand{\rsem}{\rrbracket}
\newcommand{\sem}[1]{\lsem #1 \rsem}
\newcommand{\brks}[1]{\langle #1\rangle}
\newlength{\myboxwidth}
\DeclareMathOperator{\redpar}{{
\declareslashed{}{
\vrule height2pt depth 2pt
\kern1pt
\vrule height2pt depth 2pt
}{0}{0}{\rightarrowtail}\slashed{\rightarrowtail}}}
\newcommand{\klstar}{\dagger}  					
\newcommand{\klcomp}{\mathbin{\diamond}}  		
\renewcommand{\emptyset}{\mathop{\slashed0}} 
\newcommand{\join}{\operatorname{\sqcup}}
\newcommand{\lub}{\operatorname{\bigsqcup}}
\spnewtheorem{thm}[theorem]{Theorem}{\bfseries}{\itshape}
\spnewtheorem{cor}[theorem]{Corollary}{\bfseries}{\itshape}
\spnewtheorem{cnj}[theorem]{Conjecture}{\bfseries}{\itshape}
\spnewtheorem{lem}[theorem]{Lemma}{\bfseries}{\itshape}
\spnewtheorem{lemdefn}[theorem]{Lemma and Definition}{\bfseries}{\itshape}
\spnewtheorem{prop}[theorem]{Proposition}{\bfseries}{\itshape}
\spnewtheorem{defn}[theorem]{Definition}{\bfseries}{\upshape}
\spnewtheorem{rem}[theorem]{Remark}{\bfseries}{\upshape}
\spnewtheorem{expl}[theorem]{Example}{\bfseries}{\upshape}
\spnewtheorem{thmdefn}[theorem]{Theorem and Definition}{\bfseries}{\itshape}
\spnewtheorem{propdefn}[theorem]{Proposition and Definition}{\bfseries}{\itshape}
\spnewtheorem{assumption}[theorem]{Assumption}{\bfseries}{\upshape}
\spnewtheorem{algorithm}[theorem]{Algorithm}{\bfseries}{\upshape}
 \renewcommand{\todo}[1]{}
\DeclareFontFamily{OT1}{pzc}{}
\DeclareFontShape{OT1}{pzc}{m}{it}{<-> s * [1.200] pzcmi7t}{}
\DeclareMathAlphabet{\mathpzc}{OT1}{pzc}{m}{it}
\newcommand{\pacman}[1]{}
\newcommand{\nact}{\text{\textoneoldstyle}}
\newcommand{\zact}{\text{\textzerooldstyle}}
\newcommand{\longsquiggly}{\xymatrix{{}\ar@{~>}[r]&{}}}
\setlist{leftmargin=0cm,itemindent=.6cm,topsep=.2cm} 
\title{Coalgebraic Weak Bisimulation from Recursive Equations over Monads}
\author{Sergey Goncharov\inst{1} and  Dirk Pattinson\inst{2}}
\institute{Department of Computer Science, FAU Erlangen-N\"urnberg\and
Research School of Computer Science, Australian National University}
\begin{document}
\maketitle

\begin{abstract}
Strong bisimulation for labelled transition systems is one of the
most fundamental equivalences in process algebra, and has been
generalised to numerous classes of systems that exhibit richer
transition behaviour. Nearly all of the ensuing notions are
instances of the more general notion of \emph{coalgebraic bisimulation}.
Weak bisimulation, however, has so far been much less amenable to a
coalgebraic treatment. Here we attempt to close this gap by giving a coalgebraic
treatment of (parametrized) weak equivalences, including weak bisimulation. 
Our analysis requires that the functor defining the transition type of the system
is based on a suitable order-enriched monad, which allows us to capture weak
equivalences by least fixpoints of recursive equations.  Our notion is in agreement
with existing notions of weak 
bisimulations for labelled transition systems, probabilistic and
weighted systems, and simple Segala systems. 
\end{abstract}
\section{Introduction}
Both strong and weak bisimulations are fundamental
equivalences in process algebra~\cite{Milner89}. Both 
have been adapted to systems with richer behaviour 
such as probabilistic and weighted transition systems. 
For each class of systems, strong bisimulation is defined in a
similar way which is explained by universal coalgebra
where strong bisimulation is recovered as a canonical equivalence
that parametrically depends on the type of
system~\cite{Rutten:2000:UCT}. Weak bisimulations are much more difficult to analyse even for labelled transition
systems (LTS),
and much less canonical in status (e.g.~branching and delay bisimulations~\cite{GlabbeekWeijland96}).

We present a unified, coalgebraic treatment of various types of weak
bisimulation. 
An important special (and motivating) case of our definition is
\emph{probabilistic weak bisimulation} of 
Baier and Herrmanns~\cite{BaierHermanns97}. Unlike labelled
transition systems, probabilistic weak bisimulation needs to 
account for \emph{point-to-set} transitions, while \emph{point-to-point}
transitions, as for labelled transition systems, do not suffice:
Every LTS with a transition relation $\to$ induces an LTS with a \emph{weak transition relation} $\To$ and 
weak bisimulation for the original system is strong bisimulation of the
transformed one. This approach fails in the probabilistic case, as
weak point-to-point transitions no longer form a probability
distribution: in a system where $x \xrightarrow{a(0.5)} y$ and $x
\xrightarrow{\tau(0.5)} x$, we obtain $x
\xRightarrow{a(1)} y$ as the probability that $x$
evolves to $y$ along a trace of the form $\tau^*\cdot a\cdot \tau^*$
is clearly one, but  also $x \xRightarrow{\tau(1)} x$ as
the system will also evolve from $x$ to $x$ along $\tau^*$ also with
probability one (by simply doing nothing). Crucially, both events
are not independent.  This is resolved by
relating states to state sets along transition sequences, and 
the probability
$P(x,\Lambda,S)$ of $x$ evolving to a state in $S$ along a trace in
$\Lambda$ 
is the probability of the event that contains all execution
sequences leading from $x$ to $S$ via $\Lambda$, called \emph{total
probability} in \emph{op.cit.}
By re-formulating this idea axiomatically, we show that it is
applicable to a large class of systems, specifically coalgebras of
the form $X \to T(X \times A)$ where $T$ is enriched over 
directed complete partial orders with least element (pointed dcpos)
and non-strict maps.  Not surprisingly, similar (but stronger) assumptions
also play a prominent role in coalgebraic trace semantics
\cite{HasuoJacobsEtAl07}, and have two ramifications: the fact that
the functor $T$ that describes the branching behaviour extends to a
monad allows us to consider \emph{transition sequences}, and
order-enrichment permits us to compute the cumulative effect of
(sets of) transition sequences recursively using Kleene's fixpoint theorem. 
Our construction is parametric in an \emph{observation
pattern} that can be varied to obtain e.g.\ weak and delay
bisimulation. 
We demonstrate by example that our definition generalises concrete
definitions of probabilistic and weak weighted and probabilistic bisimulation found in the literature~\cite{BaierHermanns97,Corradini:1999:GMR,Segala:1994:PSP,Segala:1995:MVR}.  

A special role in our model is played by the operation of binary join, which is a continuous operation of the monad. We show that if it is also \emph{algebraic} in the sense of Plotkin and Power~\cite{PlotkinPower02}, which holds in the case of LTS, then weak bisimulation can be recovered as a strong bisimulation for a system of the same type, thus reestablishing Milner's weak transition construction. In the probabilistic case, for which join is unsurprisingly nonalgebraic, we show that weak bisimulation arises as strong bisimulation of a system based on the continuation monad.

\section{Preliminaries} \label{sec:prelims}

We use basic notions of category theory and coalgebra, see e.g.~\cite{Rutten:2000:UCT} for an overview. For a functor $F: \Set \to
\Set$, an \emph{$F$-coalgebra} is a pair $(X, f)$ with $f: X \to
TX$. Coalgebras form a category where the morphisms between $(X, f)$
and $(Y, g)$ are functions $\phi: X \to Y$ with $g \circ \phi = F \phi
\circ f$. A relation $E \subseteq X \times X$ is a \emph{kernel bisimulation} on $(X,
f)$ if there is an $F$-coalgebra $(Z, h)$ and two
morphisms $\phi: (X, f) \to (Z, h)$ and $\psi: (X, g) \to (Z, h)$
such that  $E = \Ke (\phi) = \lbrace (x, y) \in X \times X \mid \phi(x) =
\psi(y)\rbrace$ is the kernel of $\phi$. Clearly, kernel bisimulations are
equivalence relations, and we only consider kernel bisimulations in
what follows.  Kernel bisimulation agrees
with Aczel-Mendler bisimulation (and its variants) in case $F$ preserves pullbacks
weakly but is mathematically better behaved in case $F$ does not. It
also agrees (in all cases) with the notion of behavioural
equivalence: a thorough comparison is provided in 
~\cite{Staton:2011:RCN}.

We take \emph{monads} (on sets) as given by their extension form, i.e.\ as Kleisli
triples $\BBT = (T, \eta, \argument^\dagger)$ where $T: \Set \to \Set$ is a
functor, $\eta_X: X \to TX$ is a map for all sets $X$ and
$f^\dagger \colon TX \to TY$ is a map for all $f\colon X
\to TY$ subject to the equations $f^\dagger \eta_X =
f$, $\eta_X^\dagger = \id_{TX}$ and $(f^\dagger 
g)^\dagger = f^\dagger g^\dagger$ for all sets $X$
and all $f, g$ of appropriate type. Throughout, we write
$T$ for the underlying functor of a monad $\BBT$. The \emph{Kleisli
category} induced by a monad $\BBT$ has sets as
objects, but Kleisli-morphisms between $X$ and $Y$ are functions $f:
X \to TY$ with Kleisli composition $g \circ f = g^\dagger \circ f$
where $g^\dagger \circ f$ is function composition in $\Set$ and
$\eta_X$ is the identity at $X$. We 
use Haskell-style $\mathsf{do}$-notation to manipulate monad terms: for any $p\in TX$ and $q: X \to TY$ we write $\letTerm{x\leteq p}{q(x)}$ to denote $q^{\klstar}(p) \in TY$; if $p \in  T(X\times Y)$ we write $\letTerm{\brks{x,y}\leteq p}{q(x,y)}$.

In the sequel, we consider (among other examples) monads induced by
semirings: A \emph{semiring} is a structure $(R,
+, \cdot, 0, 1)$ such that $(R, +, 0)$ is a commutative monoid, $(R,
\cdot, 1)$ is a monoid and multiplication distributes over addition,
i.e.\ $x \cdot (y + z) = x\cdot y + x \cdot z$ and $(y+z) \cdot x =
y\cdot x + z \cdot x$.
A \emph{positively ordered semiring} is a semiring $(R, +, \cdot, 0,
1, \leq)$
equipped with a partial order $\leq$ that is positive ($0 \leq r$
for all $r
\in R)$ and compatible with the ring structure
($x\leq y$  implies that $x \Box z \leq y \Box z$ and
$z \Box x \leq  z \Box y$ for all $x, y, z \in R$ and $\Box
\in \lbrace +, \cdot \rbrace$). 
A \emph{continuous semiring} is a positively ordered semiring where
every directed set
$D \subseteq R$ has a least upper bound $\sup D \in R$ that is
compatible with the ring structure ($r \Box \sup D = \sup \lbrace
r
\Box  d \mid d \in D \rbrace$ and $\sup D \Box r = \sup \lbrace d
\Box r \mid d \in D \rbrace$ for all directed sets $D \subseteq
R$, all  $r \in R)$ and $\Box \in \lbrace +, \cdot \rbrace$.
Every continuous
semiring $R$ is a \emph{complete semiring}, i.e.\ has infinite sums given
by
$\sum_{i \in I} r_i = \sup \lbrace \sum_{i \in J}
r_i \mid J \subseteq I \mbox{ finite} \rbrace$.  We refer to
\cite{Droste:2009:SFP} for details.
If $R$ is complete, the functor $T_R X = X \to R$ extends to a
monad $\BBT_R$, called the~\emph{complete semimodule monad} (c.f.~\cite{Jacobs:2008:CTS}) with $\eta_X(x)(y) = 1$ if $x = y$ and $\eta_X(x)(y) = 0$,
otherwise, and $f^\dagger(\phi) (y) = \sum_{x \in X} \phi(x)
\cdot f(x)(y)$ for $f: X \to T_R Y$. Note if $R$ is continuous then all $T_R X$
are pointed dcpos under the pointwise ordering
of $R$ and the same applies to Kleisli homsets, i.e.\ the set of
Kleisli-maps of type $X \to TY$.

\section{Examples} \label{sec:examples}

We illustrate our generic approach to weak bisimulation by means of
the following examples. For all examples, strong bisimulation is
well understood  and known to coincide with kernel bisimulation.  As we will see later, the same is true for
weak bisimulation, introduced in the  next section.

\smallskip\noindent\textbf{Labelled Transition Systems.} We consider
the monad $\BBT_Q$ where $Q = \lbrace 0, 1 \rbrace$ is the boolean
semiring. Clearly $T_Q \cong \Pow$ where $\Pow$ is the covariant
powerset functor. A \emph{labelled transition system} can now be
described as a coalgebra $(X, f: X \to T_Q (X \times  A))$. It is well
known that bisimulation equivalences on labelled transition systems
coincide with kernel bisimulations as introduced in the previous
section.

\smallskip\noindent\textbf{Probabilistic Systems.} Consider the
monad $\BBT_{[0, \infty]}$ induced by the complete semiring of
non-negative real numbers, extended with infinity. Various types of
probabilistic systems arise as sub-classes of systems of type $(X,
f: X \to T_{[0, \infty]}(X \times A))$. For \emph{reactive
systems}, one postulates $\sum_{y \in X} f(x)(y, a) \in
\lbrace 0, 1 \rbrace$ for all $x \in X$ and all $a \in A$. 
\emph{Generative systems} satisfy $\sum_{(y, a) \in Y
\times A} f(x)(y, a) \in \lbrace 0, 1 \rbrace$ for all $x \in X$,
and \emph{fully probabilistic systems} satisfy $\sum_{(y, a) \in X
\times A} f(x)(y, a) = 1$ for all $x \in X$.
We refer to~\cite{Bartels:2003:HPS} for a detailed analysis of
various types of probabilistic systems in coalgebraic terms. It is
known that probabilistic bisimulation equivalence
\cite{LarsenSkou91} and kernel bisimulations agree
\cite{VinkRutten99}.  Our justification of viewing these various
types of probabilistic systems as $[0, \infty]$ weighted transition
systems comes from the fact that kernel bisimulations are
reflected by embeddings:

\begin{lem} \label{lem:emb}
Let $\kappa: F \to G$ be a monic natural transformation between two
set-functors $F$ and $G$ and $(X, f)$ be an $F$-coalgebra. Then
kernel bisimulations on the $F$-coalgebra $(X, f)$ agree with kernel
bisimulations on the $G$-coalgebra $(X, \kappa_X \circ f)$.
\end{lem}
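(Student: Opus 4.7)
The plan is to verify both inclusions: every kernel bisimulation on $(X, f)$ is a kernel bisimulation on $(X, \kappa_X \circ f)$, and conversely. The first direction is a direct naturality computation; the second is where the monicity of $\kappa$ enters essentially.

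For the forward direction, suppose $E = \Ke(\phi)$ for an $F$-coalgebra morphism $\phi: (X,f) \to (Z, h)$, i.e.\ $F\phi \circ f = h \circ \phi$. Post-composing with $\kappa_Z$ and using naturality $\kappa_Z \circ F\phi = G\phi \circ \kappa_X$ gives
\[
G\phi \circ (\kappa_X \circ f) = \kappa_Z \circ F\phi \circ f = \kappa_Z \circ h \circ \phi = (\kappa_Z \circ h) \circ \phi,
\]
so $\phi$ is also a $G$-coalgebra morphism $(X, \kappa_X \circ f) \to (Z, \kappa_Z \circ h)$, with the same kernel.

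For the converse, suppose $E = \Ke(\phi)$ for some $G$-coalgebra morphism $\phi: (X, \kappa_X \circ f) \to (Z, g)$. Factor $\phi = m \circ q$ in $\Set$, with $q: X \twoheadrightarrow X/E$ the canonical quotient and $m: X/E \hookrightarrow Z$ injective. The aim is to equip $X/E$ with an $F$-coalgebra structure $h'$ making $q$ an $F$-coalgebra morphism from $(X, f)$, since then $E = \Ke(q)$ witnesses that $E$ is a kernel bisimulation on $(X, f)$. Well-definedness of $h'(q(x)) := Fq(f(x))$ reduces to showing that $Fq \circ f$ is constant on the fibres of $q$. Given $q(x) = q(y)$, we have $\phi(x) = \phi(y)$, hence $g(\phi(x)) = g(\phi(y))$, which by the morphism equation rewrites as $G\phi(\kappa_X(f(x))) = G\phi(\kappa_X(f(y)))$. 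Splitting $G\phi = Gm \circ Gq$ and cancelling the outer $Gm$ via its injectivity, then applying naturality $Gq \circ \kappa_X = \kappa_{X/E} \circ Fq$ and cancelling $\kappa_{X/E}$ (injective, since $\kappa$ is monic in $\Set$), yields $Fq(f(x)) = Fq(f(y))$ as required.

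The main obstacle is ensuring that $Gm$ is injective. This follows from the standard fact that every $\Set$-endofunctor preserves injections with non-empty domain (any such injection splits, and $G$ preserves split monos), so $Gm$ is injective whenever $X/E$ is non-empty, i.e.\ whenever $X$ is non-empty. The corner case $X = \emptyset$ is trivial: the only relation on $X$ is empty and is vacuously a kernel bisimulation for any coalgebra structure on $X$.
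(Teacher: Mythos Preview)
Your proof is correct. The paper's argument for the non-trivial direction is slightly different and marginally more direct: given a $G$-coalgebra morphism $h:(X,\kappa_X\circ f)\to(Y,g')$, it equips the \emph{codomain} $Y$ itself with an $F$-coalgebra structure $g$ by setting $g(y)=(Fh\circ f)(x)$ for $y=h(x)$ and choosing $g(y)$ arbitrarily otherwise. Well-definedness follows from $h(x)=h(x')\Rightarrow (Gh\circ\kappa_X\circ f)(x)=(Gh\circ\kappa_X\circ f)(x')$, naturality $Gh\circ\kappa_X=\kappa_Y\circ Fh$, and injectivity of $\kappa_Y$ alone. By contrast, you factor $\phi=m\circ q$ and build the $F$-structure on the quotient $X/E$, which forces you to cancel the extra factor $Gm$; you handle this correctly via the standard fact that $\Set$-functors preserve split monos, plus the empty-domain corner case. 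The paper's route sidesteps that lemma entirely (at the cost of its own small corner case: the ``arbitrary'' choice of $g(y)$ outside the image of $h$ presupposes $FY\neq\emptyset$, which holds whenever $X\neq\emptyset$). Both arguments are essentially the same idea --- push $Fh\circ f$ (resp.\ $Fq\circ f$) through naturality and cancel $\kappa$ --- and what each buys is minor: yours lands on the canonical quotient coalgebra, the paper's avoids any auxiliary fact about $\Set$-functors.
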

\textbf{Integer Weighted Transition Systems.}  Weighted
transition systems, much like probabilistic systems, arise as
coalgebras for the functor $FX = T_{\Nat \cup \lbrace \infty
\rbrace} (X \times A)$ where $\Nat \cup \lbrace \infty \rbrace$ is
the (complete) semiring of natural numbers extended with $\infty$
and the usual arithmetic operations.  In an (integer) weighted
transition system, every labelled transition comes with a weight,
and we can write $x \stackrel{a(n)}{\longrightarrow} y$ if $f(x)(y,
a) = n$. In process algebra, weights represent different ways in
which the same transition can be derived syntactically, e.g.\ 
$a.0 + a.0 \xrightarrow{a(2)} 0$, according to the reduction of the
term on the left and right, respectively. The ensuing (strong)
notion of equivalence has been studied in 
\cite{Aceto:2010:RBG} and shown to be coalgebraic. 

The three examples above are a special instance of semiring-weighted
transition systems, studied for instance in
\cite{Latella:2012:BLS}. This is not the case for systems that
combine probability and non-determinism.

\smallskip\noindent\textbf{Non-Deterministic Probabilistic Systems.} 
As we have motivated in the introduction, a coalgebraic analysis of
weak bisimulation hinges on the ability to sequence transitions,
i.e.\ the fact that the functor $F$ defining the concrete shape of a
transition system $(X, f: X \to F(X \times A))$ extends to a monad.
The naive combination of probability and non-determinism, i.e.\
considering the functor $F = \Pow \circ \Dist$ where $\Dist(X)$ is
the set of finitely distributed probability distributions 
does not extend to a monad~\cite{VaraccaWinskel06}. One solution,
discussed in \emph{op.cit.}  and elaborated in
\cite{Jacobs:2008:CTS} is to restrict to \emph{convex} sets of
valuations. 
Informally, we use monad $\bbC$ (a variant of the $\mathcal{CM}$ monad from~\cite{Jacobs:2008:CTS}),
encompassing two semiring structures, for probability and
non-determinism,
and the former
distributes over the latter, i.e.\ $a +_p  (b+c) = (a +_p b) +
(a +_p c)$ 
where $+$ is nondeterministic choice and $+_{p}$ is
probabilistic choice (choose `left' with probability $p$ and 'right'
with probability $1-p$).
Concretely, for the underlying functor $\CCM$ of the monad $\bbC$,
$\CCM X$ is the set of nonempty
\emph{convex} sets of finite valuations over $[0,\infty)$, i.e.\ finitely supported maps to $[0,\infty)$, containing
the trivial valuation identically equal to~$0$. 
A set $S$ is convex
if
$\sum_i r_i\cdot\xi_i \in S$
whenever all $\xi_i\in S$ and $\sum_i r_i=1$. Our
definition deviates slightly from~\cite{Jacobs:2008:CTS} in
that we require that $\CCM X$ contains the zero valuation, whereas
in~\emph{op.cit.} (and also in~\cite{Brengos13}) this condition is used to restrict the class of
systems to which the theory is applied. 

\section{Weak Bisimulation, Coalgebraically}

Capturing weak bisimulation for transition systems $(X, f: X
\to T(X \times A))$ coalgebraically, where $A$ is a set of labels
that we keep fixed throughout,  amounts to two
requirements: first, $T$ needs to extend to a monad which enables us
to sequence transitions. Second, we need to be able to compute the
cumulative effect of transitions which requires the
monad to be enriched over the category of directed complete partial
orders (and non-strict morphisms).  

\begin{defn}[Completely ordered monads]\label{defn:com}
A monad $\BBT$ is \emph{completely ordered} if its Kleisli
category is enriched over the category $\DCPO_{\bot}$ of
directed-complete partial orders with least  element (pointed dcpos) 
and continuous
maps:
every hom-set $\Set(X, TY)$ is a pointed dcpo
and Kleisli composition is continuous, i.e.\ the joins
%
$
f^\klstar\comp\left(\bigsqcup_i g_i\right) =\bigsqcup_i f^\klstar\comp g_i$ and
$\left(\bigsqcup_i f_i\right)^\klstar\comp g =\bigsqcup_i f_i^\klstar\comp g $
exist and are equal whenever the  join on the left hand side is taken over a directed
set. 
A \emph{continuous operation} of arity $n$ on a completely ordered
monad is a natural transformation $\alpha: T^n \to T$ for which
every component $\alpha_X$ is Scott-continuous.
\end{defn}
The diligent reader will have noticed that the same type of
enrichment is also required in the coalgebraic treatment of trace
semantics \cite{HasuoJacobsEtAl07}. This is by no means a surprise,
as the observable effect of weak transitions are precisely given in
terms of (sets of) traces. 

Often, these sets are defined in terms of weak transitions of the
form $\stackrel{\tau}{\to}^* \cdot \stackrel{a}\to \cdot
\stackrel{\tau}{\to}^*$. We think of weak
transitions as transitions along trace sets closed under Brzozowski
derivatives which enables us to recursively decompose a weak transition into a
(standard) transition, followed by a weak transition. 

\begin{defn}[Observation pattern]\label{defn:obs} An
\emph{observation pattern} over a set $A$ of labels is a subset $B \subseteq \Pow(A^*)$ that is
closed under Brzozowski derivatives, i.e.\ $b/a = \lbrace w \in
A^* \mid aw \in b \rbrace \in B$ for all $b \in B$ and all $a \in
A$.
\end{defn}
\noindent
Different observation patterns capture different notions of weak
bisimulation:
\begin{expl}[Observation patterns]\label{expl:obs} 
Let $A$ contain a silent action $\tau$.
\begin{eenumerate}
\item the \emph{strong pattern} over $A$ is given by $B = \lbrace
\lbrace a \rbrace \mid a \in A \rbrace \cup \lbrace \emptyset, 
\lbrace \epsilon \rbrace \rbrace$. 
\item the \emph{weak
pattern} over $A$ is given by  $B = \lbrace \hat a \mid a \in A
\rbrace$ where $\hat \tau = \tau^*$ and $\hat a = \tau^* \cdot a
\cdot \tau^*$ for $a \neq \tau$.
\item The \emph{delay pattern} is $B = \lbrace \tau^* a
\mid a \in A \setminus \lbrace \tau \rbrace \rbrace \cup \lbrace
\tau^* \rbrace$.
\end{eenumerate}
It is immediate that all are closed under Brzozowski derivatives.
\end{expl}

%
%
%

\noindent
Given an observation pattern that determines the notion of traces,
our definition of weak bisimulation relies on the fact that the
cumulative effect of transitions can be computed recursively. This
is ensured by enrichment, and we have the following (see Section
\ref{sec:prelims} for the $\mathsf{do}$-notation):

\begin{lem} \label{lem:lfp} Suppose $B$ is an observation pattern over $A$, 
$\BBT$ is a completely ordered monad and $\oplus:
T^2 \to T$ is continuous. 
Then the equation
\begin{align}\label{eq:rec}\tag{$\bigstar$}
f_{h}^B(x)(b) = 
  \left.\begin{cases}
	\eta(h(x)) & \mbox{if } \epsilon \in b \\ \bot & \mbox{otherwise}
	\end{cases} \right\}
  \oplus \letTerm{\brks{y,a}\leteq f(x)}{f_{h}^B(y)(b/a)}
\end{align}
has a unique least solution $f_{h}^B:X\to (TY)^B$ 
for all $f: X \to
T(X \times A)$ and all $h: X \to Y$.
\end{lem}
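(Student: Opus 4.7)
The plan is to read (\ref{eq:rec}) as a fixpoint equation $g = \Phi(g)$ on the function space $\mathcal{F} = X \to (TY)^B$ and apply Kleene's fixpoint theorem. First I would equip $\mathcal{F}$ with the pointwise order inherited from the pointed dcpo structure on $TY$, which exists because $\BBT$ is completely ordered. Suprema and the bottom element are then computed pointwise, so $\mathcal{F}$ is itself a pointed dcpo. Next I define $\Phi\colon \mathcal{F}\to\mathcal{F}$ by setting $\Phi(g)(x)(b)$ to the right-hand side of (\ref{eq:rec}) with $f_h^B$ replaced by $g$; since $B$ is closed under Brzozowski derivatives, $b/a\in B$ for all $a\in A$, so $g(y)(b/a)$ is well typed and $\Phi$ is well defined. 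By construction, solutions of (\ref{eq:rec}) coincide with fixpoints of $\Phi$.

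The main step is to show that $\Phi$ is Scott-continuous. Fix $x\in X$ and $b\in B$. The first summand of (\ref{eq:rec}), namely $\eta(h(x))$ if $\epsilon\in b$ and $\bot$ otherwise, does not depend on $g$ and is trivially continuous. For the second summand, introduce $k_g\colon X\times A\to TY$ by $k_g(y,a) = g(y)(b/a)$; the assignment $g\mapsto k_g$ is continuous because directed suprema in $\mathcal{F}$ are pointwise. Rewriting the summand as $k_g^{\klstar}(f(x))$, the continuity of $g\mapsto k_g^{\klstar}(f(x))$ is a pointwise instance of the continuity of Kleisli extension in its argument from Definition \ref{defn:com}. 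Composing these two continuous $TY$-valued dependencies through the continuous operation $\oplus_Y$ then yields continuity of $\Phi(-)(x)(b)$, and hence of $\Phi$ itself.

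Kleene's fixpoint theorem then produces a least fixpoint $f_h^B = \bigsqcup_{n<\omega} \Phi^n(\bot)$, which is the unique least solution of (\ref{eq:rec}) since the least element of the poset of fixpoints is trivially unique.

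The one step I expect to require care is the continuity of $g\mapsto k_g^{\klstar}(f(x))$. The relevant clause of Definition \ref{defn:com} is phrased globally as $(\bigsqcup_i f_i)^{\klstar}\comp g = \bigsqcup_i f_i^{\klstar}\comp g$ for a Kleisli map $g$, and I would derive the pointwise statement at the element $f(x)\in T(X\times A)$ by instantiating $g$ as a Kleisli map $1\to T(X\times A)$ that names $f(x)$. Everything else reduces to bookkeeping about pointwise dcpo structure on function spaces and the assumed continuity of $\oplus$.
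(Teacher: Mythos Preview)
Your proposal is correct and follows exactly the same approach as the paper: define the functional on $X\to(TY)^B$, argue its Scott-continuity from the enrichment of Kleisli composition and the continuity of $\oplus$, and invoke Kleene's fixpoint theorem. If anything, your write-up is more careful than the paper's one-line appendix proof --- you explicitly note the role of closure under Brzozowski derivatives for well-definedness and the $1\to T(X\times A)$ trick for extracting pointwise continuity from the hom-set formulation of Definition~\ref{defn:com}.
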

\noindent
Lemma~\eqref{lem:lfp} follows from 
Kleene's fixpoint theorem~\cite{Winskel93}
using order-enrichment.
The central notion of our paper can now be given as follows:

\begin{defn}\label{defn:sbis} 
Suppose that $\BBT$ is a completely ordered monad with a continuous
operation $\oplus$, $B$ is
an observation pattern over $A$ and let $f: X \to
T(X \times A)$. 
An equivalence relation $E \subseteq X \times X$ is a
\emph{$B$-$\oplus$-bisimulation} if  $E \subseteq  \Ke(f^B_\pi)$
where $\pi: X \to X \slash E$ is the canonical projection
(and $f_{\pi}^B$ is the unique least solution
of~\eqref{eq:rec}). We often elide the continuous operation, and say that $x, x' \in X$ are
$B$-bisimilar, if they are related by a $B$-bisimulation.  
\end{defn}

\noindent
Some remarks are in order before we show that the above definition agrees with various notions
of weak bisimulation studied in the literature.

\begin{rem}
\begin{eenumerate}
\item Intuitively, the requirement $E \subseteq \Ke(f^B_\pi)$
expresses that any two $E$-related states $x$ and $x'$ have the same
cumulative behaviour under all trace sets in $B$, \emph{provided}
that $E$-related states are not distinguished. In other words, a
state $[x]_E$ of the quotient of the original system exhibits the
same behaviour with respect to all trace sets in $B$, as the
representative $x$ of $[x]_E$. This intuition is made precise in
Section \ref{sec:alg} where we show how $B$-bisimulation can be
recovered as strong bisimulation (and hence quotients can be
constructed). 
\item The definition of weak bisimulation above caters for systems
of the form $(X, f: X \to T(X \times A))$, i.e.~we implicitly
consider the labels as part of the observable behaviour, or as
'output'. The role of labels appears to be reversed when computing
the cumulative effect of transitions via the function $f^B_\pi:
X \to T(X \slash E)^B$. This apparent reversal  of roles is due to
the fact that every element of $B$ is a set of traces.
Accordingly, the function
application
$f^B_h(x)(b)$ represents the totality of behaviour that can be
observed along traces in $b$, starting from $x$, and trace sets are
now 'input'.
\end{eenumerate}
\end{rem}

\noindent
As a slogan, $B$-bisimilarity is a
$B$-bisimulation:
\begin{lem} \label{lem:union}
Let $(E_i)_{i \in I}$ be a family of $B$-bisimulation
equivalences on $(X, f: X \to T(X \times A)$. Then so is the
transitive closure of $\bigcup_{i \in I} E_i$.
\end{lem}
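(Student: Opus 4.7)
The plan is to verify that $E := (\bigcup_{i \in I} E_i)^*$ is (i) an equivalence relation on $X$ and (ii) contained in $\Ke(f_\pi^B)$, where $\pi: X \to X/E$ is the canonical projection. Part~(i) is routine: reflexivity and symmetry are inherited from the $E_i$ and preserved by transitive closure, while transitivity is built in. For~(ii), $\Ke(f_\pi^B)$ is itself an equivalence (as the kernel of a function, hence in particular transitive), so it suffices to show $E_i \subseteq \Ke(f_\pi^B)$ for each fixed $i$. The inclusion $E_i \subseteq E$ induces a unique factorisation $\pi = q_i \circ \pi_i$ with $q_i: X/E_i \to X/E$.

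The key auxiliary step I would prove is a \emph{functoriality} property of the assignment $h \mapsto f_h^B$: for any $h: X \to Y$ and $g: Y \to Z$,
\[
  f_{g \circ h}^B(x)(b) \;=\; T(g)\bigl(f_h^B(x)(b)\bigr) \qquad (x \in X,\; b \in B).
\]
The argument is to set $G(x)(b) := T(g)(f_h^B(x)(b))$ and to establish $G = f_{g \circ h}^B$ via Kleene's characterisation $f_{g \circ h}^B = \bigsqcup_n \Phi_{g \circ h}^n(\bot)$, where $\Phi_h$ denotes the monotone operator defined by the right-hand side of~$(\bigstar)$. An induction on $n$ gives $\Phi_{g \circ h}^n(\bot) = T(g)(\Phi_h^n(\bot))$; its step reduces to the identity $\Phi_{g \circ h} \circ T(g) = T(g) \circ \Phi_h$, which follows from the naturality of $\eta$ (so $T(g) \circ \eta_Y = \eta_Z \circ g$), the naturality of $\oplus: T \times T \to T$ (so $T(g)$ distributes over $\oplus$), and the monad identity $T(g) \circ k^\klstar = (T(g) \circ k)^\klstar$ that pushes $T(g)$ inside the $\DO$-notation. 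Passing the join through $T(g)$ uses Scott-continuity of $T(g)$, inherited from the Kleisli enrichment over $\DCPO_\bot$.

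With the functoriality in hand, the main claim is short: any $(x, x') \in E_i$ gives $f_{\pi_i}^B(x) = f_{\pi_i}^B(x')$ since $E_i$ is a $B$-bisimulation, whence
\[
  f_\pi^B(x) \;=\; T(q_i) \circ f_{\pi_i}^B(x) \;=\; T(q_i) \circ f_{\pi_i}^B(x') \;=\; f_\pi^B(x'),
\]
so $E_i \subseteq \Ke(f_\pi^B)$. Using transitivity of the kernel, $E = (\bigcup_i E_i)^* \subseteq \Ke(f_\pi^B)$, which is what is required.

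The main technical obstacle I anticipate is the implicit strictness $T(g)(\bot) = \bot$, needed both to handle the $\bot$-branch of~$(\bigstar)$ in the verification of $\Phi_{g\circ h}\circ T(g) = T(g)\circ\Phi_h$ and to start the induction. This is not automatic from Definition~\ref{defn:com}, as morphisms of $\DCPO_\bot$ are only required to be continuous; but it does hold in all concrete examples of Section~\ref{sec:examples} (for $\BBT_R$ one computes $T(g)(\bot_{TX})(y) = \sum_{x: g(x) = y} 0 = 0$). Without strictness, a Knaster--Tarski argument still yields one inequality $f_{g \circ h}^B \sqsubseteq T(g) \circ f_h^B$, but the reverse direction would require more care.
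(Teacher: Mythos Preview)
Your argument is sound and in fact more complete than the paper's own. The functoriality property you isolate, $f^B_{g\circ h}(x)(b)=T(g)\bigl(f^B_h(x)(b)\bigr)$, is exactly the paper's later Lemma~\ref{lem:rec_tail}; the appendix proof of that lemma proceeds by the same uniformity/Kleene argument you sketch and uses the same ingredients (naturality of $\eta$ and of $\oplus$, the monad law $Tg\circ k^\klstar=(Tg\circ k)^\klstar$, and strictness $Tg(\bot)=\bot$). With functoriality in hand, your factorisation $\pi=q_i\circ\pi_i$ yields $f^B_\pi=T^B(q_i)\circ f^B_{\pi_i}$, whence $E_i\subseteq\Ke(f^B_{\pi_i})\subseteq\Ke(f^B_\pi)$; this is where the $B$-bisimulation hypothesis on $E_i$ actually does work.

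The paper takes a different route: a direct induction over the Kleene iterates $g_n$ of $f^B_\pi$, aiming at $E_i\subseteq\Ke(g_n)$ for all $n$. As written that argument is garbled (the family index and the iterate index are both called $i$) and, more importantly, the inductive step does not close: from $(x,x')\in E_i$ one gets $\pi(x)=\pi(x')$, but the second summand $(\lambda(y,a).\,g_{n-1}(y)(b/a))^\klstar(f(x))$ still depends on $f(x)$ versus $f(x')$, and nothing in the paper's induction invokes the assumption that $E_i$ is a $B$-bisimulation. Your approach repairs this by passing through $f^B_{\pi_i}$ rather than trying to compare $f(x)$ and $f(x')$ directly.

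Your caveat about strictness is on point and applies equally to the paper: the appendix proof of Lemma~\ref{lem:rec_tail} asserts ``Obviously $U(\bot)$'' without justification, and Definition~\ref{defn:com} does not force $Tg(\bot)=\bot$. So this is a genuine gap in the general statement, closed in all of the paper's concrete instances but not by the abstract axioms. One minor remark: the paper states Lemma~\ref{lem:rec_tail} under the extra hypothesis that $\oplus$ is \emph{algebraic}, but its proof (and yours) uses only naturality of $\oplus$ plus strictness, so your version is no weaker than what the paper actually establishes.
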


\section{Examples, Revisited}\label{eq:expl}
We demonstrate that $B$-bisimulation
agrees with the
known (and expected) notion of weak bisimulation for the examples
in Section \ref{sec:examples}. To instantiate the general definition to coalgebras of
the form $X \to T(X \times A)$, we need to verify that the monad
$\BBT$ is completely ordered. 
This is the case for complete semimodule monads over continuous semirings.

\begin{lem} \label{lem:semiring-enriched} Let $R$ be a continuous
semiring. Then the monad $\BBT_R$
is completely ordered, and both join $\sqcup$ and semiring sum $+$
are continuous operations on $T$.
\end{lem}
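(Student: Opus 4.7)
The plan is to verify the three parts of the claim in sequence: pointed-dcpo enrichment of the Kleisli hom-sets, Scott-continuity of Kleisli composition, and naturality together with componentwise continuity of $+$ and $\sqcup$.

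First I would set up the order-theoretic data. The continuous semiring $R$ is itself a pointed dcpo, with $0$ as bottom (by positivity) and directed suprema built in by definition. Equipping $T_R Y = R^Y$ with the pointwise order yields a pointed dcpo whose directed sups are taken pointwise; then $\Set(X, T_R Y) \cong R^{X\times Y}$ inherits the same pointwise pointed-dcpo structure. In particular the bottom is the identically-$0$ valuation, which is exactly the Kleisli bottom used in~\eqref{eq:rec}.

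Next I would verify Scott-continuity of Kleisli composition. Unfolding, $(f^\klstar\comp g)(z)(y) = \sum_{x \in X} g(z)(x)\cdot f(x)(y)$, and by definition of a continuous (hence complete) semiring the infinite sum is itself the directed supremum $\lub_{J} \sum_{x\in J}(\cdots)$ over finite subsets $J$ of $X$. For continuity in $f$ I take a directed family $(f_i)_{i \in I}$ and compute $\lub_i (f_i^\klstar\comp g)(z)(y) = \lub_i \lub_J \sum_{x\in J} g(z)(x)\cdot f_i(x)(y)$; both suprema are directed, so they may be swapped, after which continuity of $\cdot$ in its right argument (a continuous-semiring axiom) pushes $\lub_i$ under each finite sum, yielding $((\lub_i f_i)^\klstar\comp g)(z)(y)$. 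Continuity in $g$ is symmetric, using continuity of $\cdot$ in its left argument.

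For the two operations, I define $+$ and $\sqcup$ on $T_R$ pointwise, $(\phi \star \psi)(y) = \phi(y) \star \psi(y)$ for $\star \in \{+, \sqcup\}$. Naturality against $T_R h$ reduces, pointwise in $y$, to commutation of $\star$ with the infinite sum $\sum_{x\in h^{-1}(y)}$ defining the functorial action of $T_R$; for $+$ this is associativity-commutativity of the complete sum in $R$, and for $\sqcup$ it is the corresponding property of the join in $R$. Componentwise Scott-continuity is then immediate pointwise from continuity of the respective operation in $R$: for $+$ this is a continuous-semiring axiom, while for $\sqcup$ it is the standard fact that joins in a dcpo commute with directed suprema.

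The main obstacle is the supremum-swap in the Kleisli step: one has to check that the doubly-indexed family $(J,i) \mapsto \sum_{x\in J} g(z)(x)\cdot f_i(x)(y)$ is directed (which follows from directedness of $(f_i)_{i}$ together with directedness of finite subsets of $X$ under inclusion) so that the two directed suprema may be freely interchanged, and to apply continuity of $\cdot$ in each finite summand uniformly. Everything else is routine transfer of pointwise structure from $R$ to $R^Y$ and then to the hom-sets $R^{X\times Y}$.
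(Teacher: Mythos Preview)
Your treatment of the enrichment and of Scott-continuity of Kleisli composition is essentially the paper's own argument, only made explicit: the paper rewrites Kleisli maps as $R$-valued relations $X\times Y\to R$ with pointwise order and composition $(r\cdot s)(x,z)=\sum_{y} r(x,y)\cdot s(y,z)$, and then simply appeals to continuity of multiplication in $R$. Your explicit interchange of the two directed suprema is exactly what that appeal leaves implicit, and your directedness check for the double family $(J,i)$ is the correct way to justify it.

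There is, however, a real gap in your handling of $\sqcup$. You correctly observe that naturality of a pointwise binary operation $\star$ against $T_R h$ reduces to
\[
\sum_{x\in h^{-1}(y)}\bigl(\phi(x)\star\psi(x)\bigr)\;=\;\Bigl(\sum_{x\in h^{-1}(y)}\phi(x)\Bigr)\star\Bigl(\sum_{x\in h^{-1}(y)}\psi(x)\Bigr),
\]
and for $\star=+$ this is indeed associativity/commutativity of the complete sum. But for $\star=\sqcup$ the identity fails in any non-idempotent continuous semiring: take $R=[0,\infty]$, $X=\{1,2\}$, $h$ constant, $\phi=(1,0)$, $\psi=(0,1)$; then the left side is $1\sqcup 0+0\sqcup 1=2$ while the right side is $(1)\sqcup(1)=1$. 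So the step ``for $\sqcup$ it is the corresponding property of the join in $R$'' does not go through, and pointwise $\sqcup$ is \emph{not} a natural transformation $T_R^2\to T_R$ in general.

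The paper's own appendix proof is silent on this point: it only verifies that $\BBT_R$ is completely ordered and does not address naturality of $\sqcup$ at all. In the paper's actual development (Lemma~\ref{lem:lfp}, Definition~\ref{defn:sbis}) only componentwise Scott-continuity of $\oplus$ is used, and that \emph{does} hold for pointwise $\sqcup$. So your verification of componentwise continuity is what is really needed; the naturality clause for $\sqcup$ as literally stated cannot be established.
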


\noindent
This lemma in particular ensures that $B$-bisimulation is
meaningful for transitions systems weighted in a complete semiring,
and in particular for labelled, probabilistic and integer-weighted
systems.

\smallskip\noindent\textbf{Labelled transitions systems.}
As in Section \ref{sec:examples}, labelled transition systems are coalgebras for the
functor $FX = \Pow(X \times A)$. For an $F$-coalgebra $(X, f)$, 
Equation~\eqref{eq:rec} stipulates that
%
\[
  f_{h}(x)(b) = \bigl\{h(x)\bigm\vert \epsilon \in b \bigr\} \cup \bigcup_{x\xrightarrow{a} y} f_{h}(y)(b/a)
\]
where $x\xrightarrow{a} y$ iff $\brks{y,a}\in f(x)$. By
Kleene's fixpoint theorem,  the least solution is
\begin{align*}
f_{h}(x)(b) = \Bigl\{h(x_k)\bigm\vert x\xrightarrow{a_1} x_1\xrightarrow{a_2}\ldots\xrightarrow{a_k} x_k,~~a_1\cdots a_k\in b\Bigr\}.
\end{align*}
If $\tau \in A$, 
$B$ is the weak pattern and $E \subseteq X \times X$
is an equivalence, this gives
\[ [x']_E \in f^B_\pi(x)(\hat a) \mbox{~~~iff~~~} x \stackrel{\hat
a}{\Longrightarrow} x' \]
where $x \stackrel{\hat a}{\Longrightarrow} x'$ if there are $(y_1,
a_1), \dots, (y_n, a_n)$ such that $x \stackrel{a_1}\to y_1
\stackrel{a_2}\to  \dots \stackrel{a_n}\to y_n = x'$ and $a_1 \cdots a_n
\in \hat a$.
By Definition~\ref{defn:sbis}, $E$ is a $B$-bisimulation if for
any $\brks{x,y}\in E$, $\bigl\{[x']_E\mid x\stackrel{\hat a}\To
x'\bigr\}=\bigl\{[y']_E\mid y\stackrel{\hat a}\To y'\bigr\}$ for any
$a\in A$ (including $\tau$). The latter is easily shown to be
equivalent to the standard notion of weak bisimulation equivalence.
By analogous reasoning one readily recovers delay 
bisimulation equivalences from the delay pattern.

\smallskip\noindent\textbf{Probabilistic systems.} 
Fully probabilistic system (Section \ref{sec:examples})
are coalgebras of
type $(X, f: X \to T_{[0, \infty]}(X \times A))$, where $\BBT_{[0,
\infty]}$ is the complete semimodule monad induced by $[0, \infty]$
and additionally satisfy $\sum_{(y, a)
\in X \times A} f(x)(y, a) = 1$ for all $x \in X$. In
\cite{BaierHermanns97}, an equivalence relation $E \subseteq X \times X$ is
a weak bisimulation, if 
\[ P(x, \hat a, [y]_E) = P(x', \hat a, [y]_E) \]
for all $a \in A, y \in X$ and $(x, x') \in E$. Here $\hat a$ is
given as in Example \ref{expl:obs} and $P(x, \Lambda, C)$ is
the \emph{total probability} of the system evolving from state $x$
to a state in $C$ via a trace in $\Lambda \subseteq A^*$.
\emph{Op.cit.} states that total
probabilities satisfy the recursive equations: $P(x, \Lambda, C) = 1$
if $\epsilon \in C$ and $x \in \Lambda$, and
\[ 
  P(x, \Lambda, C) = \sum_{(y, a) \in X \times A} f(x)(y, a) \cdot
P(y, \Lambda / a, C) \]
otherwise.  In fact, total probabilities are the \emph{least}
solution (with respect to the pointwise order on $[0, \infty]$) of
the recursive equations above. 

\begin{lemma} \label{lemma:fully-probabilistic}

Let $(X, f: X \to T_{[0, \infty]} (X \times A))$ be a fully
probabilistic system, $B$ an observation pattern over $A$ and $E \subseteq X \times X$ an equivalence
relation. If $\pi: X \to X \slash E$ is the canonical projection,
then $P(x, b, [y]_E) = f^B_\pi(x)(b)([y]_E)$ for all $x, y \in
X$ and all $b \in B$, using $\sqcup$ as continuous operation.
\end{lemma}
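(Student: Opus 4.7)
The plan is to evaluate the Kleisli-level recursion~\eqref{eq:rec} pointwise at $[y]_E$ and recognise the resulting scalar recursion as the one Baier and Hermanns use to characterise the total probability $P(x, b, [y]_E)$. Write $g(x, b) := f^B_\pi(x)(b)([y]_E)$. Because $\BBT_{[0,\infty]}$ sends $\eta$ to a Dirac and Kleisli extension to a weighted sum, and since $\sqcup$ is computed pointwise on hom-sets, unfolding~\eqref{eq:rec} yields
\[
g(x, b) \;=\; e_b(x) \sqcup \sum_{(z,a) \in X \times A} f(x)(z, a)\cdot g(z, b/a),
\]
where $e_b(x) = 1$ if $\epsilon \in b$ and $\pi(x) = [y]_E$, and $e_b(x) = 0$ otherwise. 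By Lemma~\ref{lem:lfp}, $g$ is the least such solution, obtained as the join of the Kleene chain of approximants $g^{(n)}$ starting from $\bot = 0$.

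The decisive step is to show that $g \leq 1$ pointwise. I would prove this by induction along the Kleene chain, crucially using full probability $\sum_{(z,a)} f(x)(z,a) = 1$: if $g^{(n)} \leq 1$ everywhere, then $\sum f(x)(z,a)\cdot g^{(n)}(z, b/a) \leq 1$, and taking $\sqcup$ with $e_b(x) \in \{0, 1\}$ still leaves the value in $[0, 1]$; directed suprema preserve this bound. Once boundedness is established, the first disjunct dominates whenever $\epsilon \in b$ and $\pi(x) = [y]_E$, while otherwise $e_b(x) = 0$, so the displayed recursion collapses to
\[
g(x,b) \;=\; \begin{cases} 1 & \text{if } \epsilon \in b \text{ and } \pi(x) = [y]_E, \\ \sum_{(z,a)} f(x)(z,a)\cdot g(z, b/a) & \text{otherwise}, \end{cases}
\]
which is exactly the recursive characterisation of $P(\cdot, b, [y]_E)$ recorded in \cite{BaierHermanns97}.

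To upgrade ``both are solutions of the same recursion'' to ``both are equal'', I would unwind the Kleene iteration explicitly. The $n$th approximant $g^{(n)}(x, b)$ is the sum of $\prod_{i=1}^{k} f(x_{i-1})(x_i, a_i)$ over all finite paths $x = x_0 \xrightarrow{a_1} \cdots \xrightarrow{a_k} x_k$ of length $k \leq n$ with $a_1 \cdots a_k \in b$ and $\pi(x_k) = [y]_E$. Letting $n \to \infty$ produces precisely the $\sigma$-additive total measure of the disjoint union of cylinder sets that \cite{BaierHermanns97} use to define $P(x, b, [y]_E)$, giving $g = P$ on the nose.

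The principal obstacle is exactly the reconciliation of the ``max-flavoured'' base step $e_b(x) \sqcup \cdots$ in~\eqref{eq:rec} with the mutually exclusive case split in the Baier--Hermanns recursion. This reconciliation is legitimate only through the boundedness argument, which in turn hinges on full probability; dropping that hypothesis would give a weighted-automaton semantics in which the $\sqcup$ cannot be absorbed into an additive recursion, so the correspondence with $P$ would break down.
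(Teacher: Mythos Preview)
Your reconciliation of the $\sqcup$-recursion with the Baier--Hermanns case split via the bound $g^{(n)}\le 1$ is correct, and it actually already finishes the proof once you notice one more thing: since each $\sqcup$-iterate coincides with the corresponding case-split iterate, $g$ is not merely \emph{a} solution but the \emph{least} solution of the case-split recursion, and the text just above the lemma records that $P$ is that least solution too. So the third paragraph is unnecessary---and it is also where the argument breaks.

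The claimed closed form for $g^{(n)}$ is wrong. Summing $\prod_i f(x_{i-1})(x_i,a_i)$ over \emph{all} paths of length $\le n$ with $a_1\cdots a_k\in b$ and $x_k\in[y]_E$ overcounts: take $b=\tau^*$ and a single state $x\in[y]_E$ with a self-loop $x\xrightarrow{\tau(1)}x$; every length $k\le n$ contributes $1$, so your sum is $n{+}1$ and diverges, whereas $g^{(n)}(x,\tau^*)=1$ for all $n\ge 1$. The same error resurfaces in your appeal to $\sigma$-additivity: the cylinder of a path contains the cylinders of all its extensions, so the family is not disjoint. The formula you wrote is what the Kleene chain of the \emph{additive} recursion $e_b(x)+\sum\cdots$ would produce, not the one with $\sqcup$. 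The correct closed form restricts to \emph{minimal} witnessing paths---those with no proper prefix already satisfying $a_1\cdots a_j\in b$ and $x_j\in[y]_E$---whose cylinders are genuinely disjoint.

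That minimal-path route is exactly what the paper does: it rewrites $P(x,b,[y]_E)$ measure-theoretically as a sum over minimal paths (full probability ensures that truncating at first entry loses no mass), and then verifies inductively that the $(i{+}1)$st Kleene iterate equals the partial sum over minimal paths of length at most $i$, using a Brzozowski-derivative identity for the minimal-path sets. Your boundedness-plus-least-fixpoint argument is a legitimate and arguably cleaner alternative that sidesteps this path combinatorics entirely, provided you drop the flawed unwinding and close the argument as in the first paragraph.
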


\noindent
As a corollary, we obtain that weak bisimulation of fully
probabilistic systems is a special case of $B$-bisimulation for
the weak pattern.

\smallskip\noindent\textbf{Weighted transition systems.}
Weighted transition systems are technically similar to probabilistic
systems as they also appear as coalgebras for a (complete)
semimodule monad, 
but without any restriction on the sum of weights.
The associated notion of \emph{weak resource bisimulation} is
described syntactically in \cite{Corradini:1999:GMR}.  Abstracting
from the concrete syntax and taking weighted transition systems as
primitive, we are faced with a situation that is reminiscent of the
probabilistic case: a weak resource bisimulation equivalence on a
weighted transition system $(X, f: X \to T_{\Nat \cup \infty} (X
\times A))$ is an
equivalence relation $E \subseteq X \times X$ such that $xEy$ and $a
\in A$ implies that $W(x, \Lambda , C) = W(y, \Lambda , C)$ for all
equivalence classes $C \in X \slash E$ and all $\Lambda$ that are of
the form $\tau^*a\tau^*$ for $a\neq \tau$ and $\tau^*$. Here $W(x,
\Lambda, C)$ is the \emph{total weight}, i.e.\ the maximal number of
possibilities in which $x$ can evolve into a state in $C$ via a path
from $\Lambda$. Total weights can be understood as (weighted) sums
over all \emph{independent} paths that lead from $x$ into $C$ via a
trace in $\Lambda$, where two paths are independent if neither is a
prefix of the other. 
Analogously to the probabilistic case, these weights
are given by the least solution of the recursive equations
\[
W(x, \Lambda, C) =  \left. 
  \begin{cases} 1 & \epsilon \in \Lambda, x \in C \\
  0 & \mbox{otherwise} \end{cases}\right\} \sqcup
  \sum\limits_{(y, a) \in X \times A} f(x)(y, a) \cdot W(y, \Lambda /
a, C)
\]
and represent the total number of possibilities in which a process
$x$ can evolve into a process in $C$ along a trace in $\Lambda$. 
For example, we have that $W(0 + \tau.0 + \tau.\tau.0, \tau^*,
\lbrace 0 \rbrace )= 3$ representing the three different
possibilities in which the given process can become inert along a
$\tau$-trace, and $W(x, \tau^*, z) = 6$ for the triangle-shaped
system $x \stackrel{\tau(2)}\to y$, $x \stackrel{\tau(2)}\to z$ and $y
\stackrel{\tau(2)}\to  z$. It is routine to check that $W(x, b,
[x']_E) = f^B_\pi(x)(b)([x']_E)$. 
Unlike the probabilistic case,  
the number of different ways in which
processes may evolve is strictly additive.  For the weak pattern,
$B$-bisimulation is therefore the semantic manifestation of
weak resource bisimulation advocated in \cite{Corradini:1999:GMR}.
%
%
%
%

\medskip\noindent
\textbf{Probability and nondeterminism.}
Systems that combine probabilistic and nondeterministic behaviour arise
as coalgebras of type $(X, f: X \to \CCM(X \times A))$ where $\bbC$
is the monad from Section \ref{sec:examples}. Systems of this type capture so-called Segala systems. Here we stick to simple Segala systems, which are colagebras of type $\PSet(\Dist\times A)$ and for which the ensuing notion of weak probabilistic bisimulation was introduced in~\cite{Segala:1994:PSP}. 
These systems extend probabilistic
systems by additionally allowing non-deterministic transitions. 
As was essentially elaborated in~\cite{Brengos13}, every simple Segala system embeds into a coalgebra $(X, f: X \to \CCM(X \times A))$.

Completing a simple Segala system to a coalgebra over $\bbC$ amounts to forming \emph{convex} sets
of valuations; convexity arises from probabilistic choice as follows: given
non-deterministic transitions $x \to \xi$ and $x \to \zeta$, where
$\xi$ and $\zeta$ are valuations over $X \times A$ induces a
transition $x \to \xi +_p \zeta$ where $+_p$ is probabilistic choice.
Following~\cite{VaraccaWinskel06}, one way to understand this is to also consider non-deterministic choice
$+$ and to observe that 
\begin{displaymath}
\xi + \zeta = (\xi + \zeta + \xi) +_p (\xi + \zeta + \zeta) = (\xi + \zeta) +_p (\xi + \zeta) + (\xi +_p \zeta) = \xi + \zeta
+ (\xi +_p \zeta)
\end{displaymath} by the axioms $\xi + \xi = \xi +_p \xi = \xi$, $(\xi + \zeta) +_p \theta = (\xi +_p \theta) + (\zeta +_p \theta)$, the
last one describing the interaction between probabilistic and
non-deterministic choice. 

%
We argue that $B$-bisimulation where $B$ is the weak observation
pattern agrees with the notion
from~\cite{Segala:1994:PSP,Segala:1995:MVR}. We make a forward
reference to 
Theorem~\ref{thm:red}  which shows that $B$-bisimulation for $(X,f)$
amounts to strong bisimulation for $(X,f_{\id}^B)$. In other words,
weak bisimilarity can be recovered from strong bisimilarity for the
system whose transitions are weak transitions of the original
system. Solving the recursive equation for $f^B_{\id}$ (where $B$ is
the weak pattern and we use the notation of Example \ref{expl:obs})
we can write $x \stackrel{\hat a}\Longrightarrow \xi$ if $\xi \in
f^B_{\id}(x)(a)$. Intuitively, this represents that $x$ can evolve along
a trace in $\hat a$ to the valuation $\xi$, interleaving
probabilistic and nondeterministic steps. We then obtain that an
equivalence relation $E \subseteq X \times X$ is a $B$-bisimulation
if, whenever $(x, y) \in E$ and $x \stackrel{\hat a}\Longrightarrow
\xi$, there exists $\zeta$ such that $y \stackrel{\hat
a}{\Longrightarrow} \zeta$ and $\xi$ and $\zeta$ are 'equivalent up to
$E$', that is, $(F\pi) \xi = (F \pi) \zeta$ where $\pi: X \to X \slash E$
is the projection and $FX = [0, \infty)^X$. 
More concretely, the weak relation
$\mathop{\To} \in X \times B \times [0, \infty)^X$
is obtained by~\eqref{eq:rec} and is the 
least solution of the following system:
\begin{align*}
x&\xRightarrow{\hat\tau}\delta_{x}\\
x&\xRightarrow{\hat a}\zeta \text{~~~iff~~~} \exists \xi\in f(x).~\zeta\in\left\{\sum_{y\in X}\xi(y,a)\cdot\theta^\tau_{y}+\xi(y,\tau)\cdot\theta^a_y\mathop{\bigm\vert }\forall y.\, y\xRightarrow{\hat b}\theta^b_{y}\right\}  
%
\end{align*}
where $x\xRightarrow{\hat b}\zeta$ ($b\in\{a,\tau\}$) abbreviates
$\brks{x,b,\zeta}\in\mathop{\To}$; $\delta_{y}(y')=1$ if $y=y'$ and
$\delta_{y}(y')=0$ otherwise; and scalar multiplication and
summation act on valuations pointwise. Kleene's fixpoint theorem underlying Lemma~\ref{lem:lfp} ensures that the relation $\To$ can be calculated  iteratively, 
i.e.\
$\mathop{\To}=\bigcup_i\mathop{\To_i}$ where the $\To_i$ replace $\To$ in the above
recursive equations in the obvious way, hence making them recurrent.
Then $x\xRightarrow{\hat b}\zeta$ iff there is $i$
such that $x\xRightarrow{\hat b}_i\zeta$. 
The resulting definition in terms of weak transitions $\To_i$ matches weak probabilistic
bisimulation from~\cite{Segala:1994:PSP,Segala:1995:MVR}. 
%
Note that convexity of the monad precisely ensures that $\xi$ in the recursive clause above for $x\xRightarrow{\hat a}\zeta$ represents a
\emph{combined step} of the underlying Segala system, which by
definition, is exactly a convex combination of ordinary
probabilistic transitions.


\section{Weak Bisimulation as Strong Bisimulation}\label{sec:alg}
Milner's weak transition construction characterises weak
bisimilarity as bisimilarity for a (modified) system whose
transitions are the weak transitions of the original system. 
This construction does not 
transfer to the general case, witnessed by the case of (fully)
probabilistic systems. The pivotal role is played by the continuous
operation $\oplus$ that determines $B$-bisimulation.
We show that Milner's
construction generalises if $\oplus$ is \emph{algebraic} and present
a variation of the construction if algebraicity fails.
An \emph{algebraic operation} of arity $n$ on a monad $\BBT$ (e.g.~\cite{PlotkinPower02}) is a natural
transformation $\alpha: T^n\to T$ such that $\alpha_Y \circ (f^\klstar)^n = f^\klstar \circ
\alpha_X$ for all $f: X \to TY$. Algebraic operations are
automatically continuous:
\begin{lem}\label{lem:alg}
Algebraic operations of completely ordered monads are continuous.
\end{lem}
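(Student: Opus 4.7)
The plan is to exhibit each component $\alpha_X\colon (TX)^n \to TX$ as Kleisli extension against a fixed ``generic element'' of $TI$ (where $I = \{1,\dots,n\}$), which reduces Scott-continuity of $\alpha_X$ to the continuity of Kleisli composition postulated in Definition~\ref{defn:com}.

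Concretely, fix $I = \{1,\dots,n\}$ and set $\omega = \alpha_I(\eta(1),\dots,\eta(n)) \in TI$. Given $(t_1,\dots,t_n) \in (TX)^n$, let $g\colon I \to TX$ be the function $g(i) = t_i$, so that $g(i) = g^\klstar(\eta(i))$ for each $i$. Applying the algebraicity axiom with $f := g$ and to the arguments $\eta(1),\dots,\eta(n)$ gives
\begin{align*}
\alpha_X(t_1,\dots,t_n)
  = \alpha_X\bigl(g^\klstar(\eta(1)),\dots,g^\klstar(\eta(n))\bigr)
  = g^\klstar\bigl(\alpha_I(\eta(1),\dots,\eta(n))\bigr)
  = g^\klstar(\omega).
\end{align*}
Hence $\alpha_X$ factors as the bijection $\phi\colon (TX)^n \to \Set(I,TX)$ sending $(t_1,\dots,t_n) \mapsto g$, followed by the map $g \mapsto g^\klstar(\omega)$.

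The first factor $\phi$ is a dcpo isomorphism: $I$ is the $n$-fold coproduct of $1$ in $\Set$ and the $\DCPO_\bot$-enrichment of the Kleisli hom-functor forces $\Set(I,TX)$ to be the $n$-fold product of $\Set(1,TX) = TX$, equipped with the pointwise order. For the second factor, view $\omega$ as a morphism $1 \to TI$; the second clause of Definition~\ref{defn:com} yields $(\bigsqcup_d g_d)^\klstar \comp \omega = \bigsqcup_d g_d^\klstar \comp \omega$ for every directed family $(g_d)$, and evaluating at the unique element of $1$ delivers Scott-continuity of $g \mapsto g^\klstar(\omega)$. Composing, $\alpha_X$ is Scott-continuous.

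There is no genuine obstacle; the only friction is the book-keeping required to identify the product dcpo $(TX)^n$ with $\Set(I,TX)$ under the pointwise ordering, which is an immediate consequence of the enrichment hypothesis and the fact that $\Set$ has finite coproducts.
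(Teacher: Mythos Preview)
Your proof is correct and follows essentially the same route as the paper: both arguments represent $\alpha_X$ via the generic effect $\omega = \alpha_I(\eta(1),\dots,\eta(n)) \in TI$ (the paper quotes this as a separate lemma attributed to Plotkin--Power, whereas you derive it inline from the algebraicity axiom), and then read off Scott-continuity from the continuity of Kleisli composition in the second argument. Your handling of the identification $(TX)^n \cong \Set(I,TX)$ as pointed dcpos is more explicit than the paper's, but the underlying argument is the same.
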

\begin{expl}[Algebraic operations]
Semiring summation $+$ is algebraic on continuous semimodule monads. 
If the underlying semiring is idempotent, e.g.\ the boolean
semiring, 
summation coincides with the join operation $\sqcup$  which is
therefore also algebraic. The bottom element
$\bot$ is a nullary algebraic operation (constant). 
The join operation is algebraic on the monad $\bbC$ from Section
\ref{sec:examples}.
%
%
The join operation $\join$ is generally not algebraic for free
(complete) semimodule monads unless the semiring is idempotent.
\end{expl}

\noindent
Algebraicity of $\oplus$ allows us to lift Milner's construction to
the coalgebraic case: 
$B$-bisimulations coincide with kernel
bisimulations for a modified system 
of the \emph{same transition type}. This instantiates to labelled
transition systems, as $\sqcup$ is algebraic on the semimodule monad
induced by the boolean semiring. 
We
show this using a sequence of lemmas, the first asserting that
algebraic operations commute over fixpoints. 
\begin{lem}\label{lem:rec_tail}
Suppose $h:X\to Y$ and $u:Y\to Z$. Given a coalgebra $(X, f: X \to
T(X \times A))$ we have that
$f^B_{u h}=T^Bu\comp
f^B_{h}$ if $\oplus$ is algebraic. 
\end{lem}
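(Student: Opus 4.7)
The plan is to show that $G := T^B u \comp f^B_h$ satisfies the recursive equation~\eqref{eq:rec} defining $f^B_{uh}$, whence $G = f^B_{uh}$ by the uniqueness clause of Lemma~\ref{lem:lfp}. Unpacking, $G(x)(b) = Tu(f^B_h(x)(b))$, and the task reduces to pushing $Tu$ through the right-hand side of the equation satisfied by $f^B_h$.

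Three commutations are needed. First, $Tu$ should commute with the case-split initial term: naturality of $\eta$ gives $Tu(\eta(h(x))) = \eta(u(h(x)))$, while $Tu(\bot) = \bot$ holds because $\bot$ is a nullary algebraic constant, so that the whole case-split is simply composed with $u$. Second, $Tu$ distributes over $\oplus$: instantiating algebraicity of $\oplus$ at the morphism $\eta_Z \comp u \colon Y \to TZ$, whose Kleisli lifting is $Tu$, yields $Tu(a \oplus b) = Tu(a) \oplus Tu(b)$. Third, $Tu$ absorbs into the Kleisli composition via $Tu \comp k^\klstar = (Tu \comp k)^\klstar$, a direct consequence of $Tu = (\eta_Z \comp u)^\klstar$ together with $(f^\klstar g)^\klstar = f^\klstar g^\klstar$; this rewrites $\letTerm{\brks{y,a}\leteq f(x)}{f^B_h(y)(b/a)}$ as $\letTerm{\brks{y,a}\leteq f(x)}{G(y)(b/a)}$. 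Assembling these, the $Tu$-image of the equation for $f^B_h$ is exactly the equation that $f^B_{uh}$ must satisfy.

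Strictly, establishing that $G$ solves~\eqref{eq:rec} only yields $f^B_{uh} \sqsubseteq G$ by leastness. For the reverse inequality I would induct on the Kleene iterates of the fixpoint operators defining $f^B_h$ and $f^B_{uh}$: the base case collapses to $Tu(\bot) = \bot$, the step case reuses the three commutations above, and passage to the supremum is justified by Scott-continuity of $Tu = (\eta_Z \comp u)^\klstar$, which is immediate from the $\DCPO_\bot$-enrichment of the Kleisli category. The main obstacle --- and the point where algebraicity is essential --- is precisely the commutation step: without algebraicity, $Tu$ would disturb both the silent case via $\bot$ and the cumulative $\oplus$-combination of transition branches, so the fixpoint transfer would collapse.
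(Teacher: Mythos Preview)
Your argument is correct and coincides with the paper's, which packages your Kleene-iterate induction as an appeal to the \emph{uniformity principle} for least fixpoints: from $U(\bot)=\bot$ and $UF=GU$ (with $U=T^Bu\comp(-)$, and $F,G$ the functionals defining $f^B_h$, $f^B_{uh}$) conclude $U\mu F=\mu G$ directly. Your opening appeal to a ``uniqueness clause'' in Lemma~\ref{lem:lfp} is misplaced---the lemma guarantees only a \emph{least} solution, as you yourself note---so the fixpoint-verification step is redundant; the Kleene induction alone already delivers equality.
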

Similarly, sans algebraicity, 
$B$-bisimulations commute with morphisms.

\begin{lem}\label{lem:dia} Let $h:X\to Y$ be a morphism from $(X,f:X \to T(X \times A))$ to $(Y, g:
Y \to T(Y \times A))$. Then $g_u^B \circ
h = f^B_{u h}$ for all $u:  Y \to Z$. 
\end{lem}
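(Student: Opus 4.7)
The plan is to prove the equality by establishing $g^B_u \circ h \sqsubseteq f^B_{uh}$ and $f^B_{uh} \sqsubseteq g^B_u \circ h$ in the pointwise pointed-dcpo order on the hom-set $\Set(X,(TZ)^B)$. The crucial input is the coalgebra-morphism identity $g \circ h = T(h \times \id_A) \circ f$, which via the standard monad law $\phi^{\klstar} \circ T\psi = (\phi \circ \psi)^{\klstar}$ translates, in $\mathsf{do}$-notation, to
\[
\letTerm{\brks{y',a} \leteq g(h(x))}{\phi(y')(b/a)} \;=\; \letTerm{\brks{y,a} \leteq f(x)}{\phi(h(y))(b/a)}
\]
for any $\phi : Y \to (TZ)^B$.

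First I would show that $g^B_u \circ h$ is a solution of~(\ref{eq:rec}) for $f$ with postcomposition $u \circ h$. Substituting it into the right-hand side at $(x,b)$, the case expression produces $\eta(u(h(x)))$ or $\bot$, matching the head exactly; the tail $\letTerm{\brks{y,a} \leteq f(x)}{g^B_u(h(y))(b/a)}$ rewrites, by the identity above with $\phi = g^B_u$, to $\letTerm{\brks{y',a} \leteq g(h(x))}{g^B_u(y')(b/a)}$, and recombining with the head this is exactly $g^B_u(h(x))(b)$ by the fixpoint equation satisfied by $g^B_u$ at $h(x)$. Since $f^B_{uh}$ is the \emph{least} fixpoint (Lemma~\ref{lem:lfp}), the inequality $f^B_{uh} \sqsubseteq g^B_u \circ h$ follows immediately.

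For the reverse inequality I would unfold the least fixpoint by Kleene iteration: write $g^B_u = \bigsqcup_{n \in \omega} \psi_n$ with $\psi_0 = \bot$ and $\psi_{n+1}$ obtained by applying the defining operator for $g^B_u$ once to $\psi_n$, and prove by induction on $n$ that $\psi_n \circ h \sqsubseteq f^B_{uh}$. The base case is immediate; in the inductive step, rewrite the tail of $\psi_{n+1}(h(x))(b)$ by the morphism identity to $\letTerm{\brks{y,a} \leteq f(x)}{\psi_n(h(y))(b/a)}$, then apply the induction hypothesis pointwise together with monotonicity of $\oplus$ and of Kleisli composition (both guaranteed by Definition~\ref{defn:com}) to bound this by $f^B_{uh}(x)(b)$. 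Taking the directed join and using continuity of precomposition with $h$ yields $g^B_u \circ h = \bigsqcup_n (\psi_n \circ h) \sqsubseteq f^B_{uh}$.

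The main obstacle is the "morphism rewriting" step that turns the coalgebra-morphism property into the stated Kleisli-extension identity in $\mathsf{do}$-notation; once that identity is in hand, neither inequality requires any algebraicity assumption on $\oplus$, and the rest is routine Kleene-fixpoint bookkeeping using the enrichment from Definition~\ref{defn:com}.
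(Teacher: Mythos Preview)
Your proof is correct. The paper's own proof reaches the same conclusion via a single appeal to the \emph{uniformity principle} for least fixpoints of continuous maps: whenever $U(\bot)=\bot$ and $UF=GU$, one has $U(\mu F)=\mu G$. Here $U(w)=w\circ h$, $F$ is the functional defining $g^B_u$, and $G$ the one defining $f^B_{uh}$; the verification of $UF=GU$ is exactly your morphism-rewriting identity. So the paper obtains the equality in one stroke rather than as two inequalities.

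What you do differently is split the argument: one direction uses the universal property of the least fixpoint (any solution of the equation for $f^B_{uh}$ dominates it), the other unfolds the Kleene chain for $g^B_u$ and pushes the bound through by induction. This is a perfectly legitimate and somewhat more elementary route---it avoids citing uniformity as a named principle and makes explicit where monotonicity of $\oplus$ and of Kleisli composition enter. The paper's approach is more concise and makes the structural reason (commutation of the two defining operators through precomposition with $h$) more visible. Both proofs pivot on the same Kleisli-extension rewriting you isolated, and, as you note, neither needs algebraicity of~$\oplus$.
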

Consequently, 
kernel bisimulations are $B$-bisimulations:
\begin{cor}
Let $h:X\to Y$ be a morphism of coalgebras $(X, f: X \to T(X \times
A))$ and $(Y, g: Y \to T(Y \times A))$. 
Then $\Ke h\subseteq\Ke
f^B_{h}$.
\end{cor}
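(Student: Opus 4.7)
The plan is to apply Lemma~\ref{lem:dia} in the simplest possible way, namely with the codomain map $u = \id_Y$. Substituting this into the conclusion of the lemma yields
\[
  g^B_{\id_Y} \circ h \;=\; f^B_{\id_Y \circ h} \;=\; f^B_{h},
\]
so the map $f^B_h : X \to (TY)^B$ factors through $h$ via $g^B_{\id_Y} : Y \to (TY)^B$.

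Once this factorisation is in place, the kernel inclusion is immediate from general principles: if $h(x) = h(x')$ then applying $g^B_{\id_Y}$ to both sides gives $f^B_h(x) = g^B_{\id_Y}(h(x)) = g^B_{\id_Y}(h(x')) = f^B_h(x')$, i.e.\ $\brks{x,x'} \in \Ke f^B_h$.

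There is essentially no obstacle here; all the real work has been done in Lemma~\ref{lem:dia}, which establishes that the solution of the recursive equation~\eqref{eq:rec} is natural in the target along coalgebra morphisms. The corollary is just the observation that this naturality specialises, at $u = \id_Y$, to the statement that $f^B_h$ is constant on the fibres of $h$. Note also that we do \emph{not} need algebraicity of $\oplus$ for this step: Lemma~\ref{lem:dia} is stated without that hypothesis, in contrast to Lemma~\ref{lem:rec_tail}.
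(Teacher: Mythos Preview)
Your proof is correct and is exactly the intended argument: the paper presents this as an immediate corollary of Lemma~\ref{lem:dia}, and instantiating that lemma at $u=\id_Y$ to obtain the factorisation $f^B_h = g^B_{\id_Y}\circ h$ is precisely the step implicit in the word ``Consequently''. Your observation that algebraicity of $\oplus$ is not needed here is also correct and worth making explicit.
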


\noindent
Lemma \ref{lem:rec_tail} shows that for monads equipped with an algebraic operation
$\oplus$ (such as the monad 
defining) labelled transition systems, we can recover
$B$-bisimilarity as strong bisimilarity of a transformed
system.
\begin{thm}\label{thm:red}
Provided $\oplus$ is algebraic, $E$ is a $B$-bisimulation on a monad-type coalgebra $(X, f)$ iff $E$ is a kernel
bisimulation equivalence on $(X, f^B_{\id})$. 
\end{thm}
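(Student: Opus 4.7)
The plan is to reduce both implications to a single identity. Applying Lemma~\ref{lem:rec_tail} (which crucially uses algebraicity of $\oplus$) with $h=\id_X$ and $u=\pi:X\to X/E$ the canonical projection yields
$f^B_\pi = T^B\pi\comp f^B_{\id}$.
Thus the weak-behaviour map witnessing $E$-equivalence factors through $f^B_{\id}$ via the functorial action of $T^B$ on $\pi$, and the whole theorem follows from elementary manipulation around this identity.

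For the forward direction, assume $E\subseteq\Ke(f^B_\pi) = \Ke(T^B\pi\comp f^B_{\id})$. Then $T^B\pi\comp f^B_{\id}$ identifies every pair in $E$ and hence factors uniquely through $\pi$ to produce $\tilde f:X/E\to T^B(X/E)$ with $\tilde f\comp\pi = T^B\pi\comp f^B_{\id}$. This is precisely the statement that $\pi$ is a $T^B$-coalgebra morphism from $(X, f^B_{\id})$ to $(X/E, \tilde f)$; since $E=\Ke(\pi)$, this exhibits $E$ as a kernel bisimulation equivalence on $(X, f^B_{\id})$.

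For the converse, suppose $E=\Ke(\phi)$ for some $T^B$-coalgebra morphism $\phi:(X, f^B_{\id})\to(Z, h)$, and factor $\phi = \bar\phi\comp\pi$ with $\bar\phi:X/E\hookrightarrow Z$ injective. For $(x,y)\in E$ we have $\phi(x)=\phi(y)$, and the morphism law gives $T^B\phi(f^B_{\id}(x)) = h(\phi(x)) = h(\phi(y)) = T^B\phi(f^B_{\id}(y))$. Expanding $\phi=\bar\phi\comp\pi$ and cancelling $T^B\bar\phi$ --- which is injective because $\bar\phi$ is a split mono in $\Set$ and every $\Set$-functor preserves split monos --- one obtains $T^B\pi(f^B_{\id}(x)) = T^B\pi(f^B_{\id}(y))$, i.e.\ $f^B_\pi(x)=f^B_\pi(y)$ by the key identity, so $E\subseteq\Ke(f^B_\pi)$. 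The main obstacle is this cancellation step, which, apart from handling the trivial case $X/E=\emptyset$ separately, is a standard set-theoretic argument; modulo it, the theorem reduces to a diagram chase powered by the algebraicity-driven identity from Lemma~\ref{lem:rec_tail}.
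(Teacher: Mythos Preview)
Your proof is correct and follows essentially the same route as the paper: both directions pivot on the identity $f^B_\pi = T^B\pi\comp f^B_{\id}$ from Lemma~\ref{lem:rec_tail}, use it to put a $T^B$-coalgebra structure on $X/E$ for the forward implication, and unwind the coalgebra-morphism law for the converse. If anything, your converse is a bit more explicit than the paper's, which concludes only $E\subseteq\Ke f^B_h$ for the witnessing morphism $h$ and leaves the passage to $\Ke f^B_\pi$ implicit; your factorisation $\phi=\bar\phi\comp\pi$ and cancellation of $T^B\bar\phi$ is precisely the missing step.
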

If $\oplus$ is not algebraic it can still be possible to recover $B$-bisimulation as a kernel bisimulation for a system of a different type. For probabilistic systems this was done in~\cite{SokolovaVinkEtAl09}. Here, we obtain a similar result in a more conceptual way using the~\emph{continuous continuation monad} $\BBT$, which is obtained from the standard continuation monad~\cite{Moggi91} by restricting to continuous functions: the functorial part of $\BBT$ is $TX = (X \to D) \to_c D$ where $\to_c$ it the continuous function space, $D$ is a directed-complete partial order, and $(X \to D)$ is ordered pointwise. 
\begin{lem}\label{lem:conalg}
For a pointed dcpo $D$, $TX = (X \to D) \to_c D$ extends to a submonad $\BBT$ of the corresponding continuation monad, $\BBT$ is completely ordered, and every $\oplus:T^2\to T$, given pointwise, i.e.\ $(p\oplus q)(c)=p(c)\oplus q(c)$, is algebraic.
\end{lem}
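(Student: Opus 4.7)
My plan is to verify the three assertions of the lemma in order: that $T$ is a submonad of the continuation monad, that it is completely ordered, and that any pointwise-defined operation is algebraic. For the submonad claim, I would first check that the unit $\eta_X(x)(k) = k(x)$ is continuous in $k$ on the pointwise-ordered function space $X\to D$, since this is just evaluation at $x$, which preserves pointwise directed suprema. Next, given $f:X\to TY$ and $p\in TX$, I would verify that $f^{\klstar}(p)(k) = p(\lambda x.\, f(x)(k))$ is continuous in $k$: as $k$ varies in a directed set in $Y\to D$, the map $\lambda x.\,f(x)(k)$ varies as a pointwise directed supremum in $X\to D$ (using continuity of each $f(x)$), and then continuity of $p$ delivers continuity of $f^{\klstar}(p)$. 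The monad equations transfer without modification from those of the ambient continuation monad, so $\BBT$ is indeed a submonad.

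For order-enrichment, I would invoke the standard domain-theoretic fact that, since $D$ is a pointed dcpo, so is $X\to D$ under the pointwise order, and the subset $(X\to D)\to_c D$ of continuous maps is closed under pointwise directed suprema and contains the constant-$\bot$ map; hence $TX$ is a pointed dcpo, and the Kleisli hom-sets $X\to TY$ are likewise pointed dcpos under the pointwise order. Continuity of Kleisli composition $(g\klcomp f)(x)(k) = f(x)(\lambda y.\, g(y)(k))$ then needs to be checked separately in each argument: a directed supremum in $g$ is computed pointwise in $D$, so $\lambda y.\, g(y)(k)$ varies as a pointwise supremum, which is preserved by the continuous map $f(x)$; a directed supremum in $f$ passes through evaluation at $x$ and then through application to the fixed element $\lambda y.\, g(y)(k)$, again by pointwiseness of suprema in $TY$.

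For algebraicity, the argument reduces to a one-line unfolding. Given $f:X\to TY$ and $p,q\in TX$, the calculation
\[
\oplus_Y\bigl(f^{\klstar}(p), f^{\klstar}(q)\bigr)(k) \;=\; p\bigl(\lambda x.\, f(x)(k)\bigr) \oplus q\bigl(\lambda x.\, f(x)(k)\bigr) \;=\; (p\oplus_X q)\bigl(\lambda x.\, f(x)(k)\bigr) \;=\; f^{\klstar}(p\oplus_X q)(k)
\]
supplies the defining identity of an algebraic operation, and naturality of $\oplus$ follows along the same lines from the pointwise definition. One side-condition worth flagging is that for pointwise $\oplus$ to restrict to a natural transformation $T^2\to T$ at all, the underlying binary operation on $D$ must itself be (jointly) continuous, so that the pointwise combination of two continuous functions is continuous. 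The main obstacle is the well-definedness step for $f^{\klstar}$: the two nested uses of continuity (first in the inner $\lambda x.\, f(x)(k)$, then in the outer application of $p$) have to be separated carefully, together with the observation that directed suprema in $(X\to D)\to_c D$ truly compute pointwise in $D$. Once this is established, the algebraicity calculation is essentially syntactic, built into the interaction between the pointwise shape of $\oplus$ and the defining formula of the continuation Kleisli extension.
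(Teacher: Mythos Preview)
Your proposal is correct. The algebraicity calculation is essentially identical to the paper's.

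For the submonad and order-enrichment claims, however, the paper takes a slicker categorical route rather than your direct verification. It observes that the Kleisli category of the full continuation monad over $D$ is, via the currying isomorphism, the opposite of the category whose objects are sets and whose morphisms $X\to Y$ are functions $D^Y\to D^X$ under ordinary composition. Restricting to \emph{continuous} maps $D^Y\to_c D^X$ yields a subcategory (closed under identities and composition), and since $D^Y\to_c D^X\cong X\to TY$, this subcategory is itself a Kleisli category, giving the submonad structure for free. Order-enrichment is then immediate: the hom-sets $D^Y\to_c D^X$ inherit the pointwise pointed dcpo structure from $D$, and ordinary function composition is trivially continuous in each argument.

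Your elementary approach unpacks exactly the same content---your continuity checks for $\eta$, $f^\klstar$, and Kleisli composition are precisely what is needed to verify that continuous maps form a subcategory and that the isomorphism respects the enrichment---but the paper's formulation avoids the nested continuity bookkeeping by working on the other side of the currying isomorphism, where the composition is just function composition. Your version has the merit of being self-contained and making the side-conditions (such as the continuity requirement on the underlying binary operation on $D$) explicit.
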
 

\noindent
The following lemma is the $B$-bisimulation analogue of Lemma
\ref{lem:emb} and is the main technical tool for reducing
$B$-bisimulation to kernel bisimulation.
\begin{lem}\label{lem:ext} Let $(X,f:X\to T(X \times A))$ be a 
coalgebra and $\kappa: T \to \widehat{T}$ an injective
monad morphism. 
If $\widehat\oplus$ is an algebraic operation on $\widehat{T}$
such that
$\widehat\oplus \mathop{\circ} \kappa^2 = \kappa\mathop{\circ}
\oplus$ then
$B$-$\oplus$-bisimulation equivalences on $(X, f)$ and
$B$-$\widehat\oplus$-bisimulation equivalences 
on $(X, \kappa f)$ agree.
\end{lem}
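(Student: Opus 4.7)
The approach is to establish the identity $\kappa_{X/E}\comp f^B_\pi=(\kappa f)^B_\pi$ for the projection $\pi\colon X\to X/E$ by any equivalence $E$, and then extract the agreement of the two bisimulation notions from injectivity of $\kappa_{X/E}$. Indeed, once the identity is in place, injectivity gives $\Ke(f^B_\pi)=\Ke(\kappa_{X/E}\comp f^B_\pi)=\Ke((\kappa f)^B_\pi)$ (applied pointwise in $b\in B$), so by Definition~\ref{defn:sbis} an equivalence $E$ satisfies $E\subseteq\Ke(f^B_\pi)$ iff $E\subseteq\Ke((\kappa f)^B_\pi)$, which is exactly the desired equivalence.

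For the identity, I would put $g:=\kappa\comp f^B_\pi$ and show it is the least solution of~\eqref{eq:rec} for $(X, \kappa f)$ and $\widehat\oplus$. Applying $\kappa$ to the recursion satisfied by $f^B_\pi$ and pushing $\kappa$ inward, the compatibility $\widehat\oplus\comp\kappa^2=\kappa\comp\oplus$ turns $\oplus$ into $\widehat\oplus$; the monad-morphism axioms turn $\eta_T$ into $\eta_{\widehat T}$ and rewrite the Kleisli lifting via $\kappa\comp h^\klstar=(\kappa\comp h)^\klstar\comp\kappa$, so the do-block $\letTerm{\brks{y,a}\leteq f(x)}{f^B_\pi(y)(b/a)}$ becomes $\letTerm{\brks{y,a}\leteq\kappa f(x)}{g(y)(b/a)}$. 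Provided $\kappa$ sends $\bot_{T(X/E)}$ to $\bot_{\widehat T(X/E)}$, this exhibits $g$ as a fixpoint of the functional whose least fixpoint is $(\kappa f)^B_\pi$. To promote this to the \emph{least} fixpoint I would iterate Kleene's construction: an induction on $n$ shows that $\kappa$ commutes with the $n$-th Kleene iterate, and continuity of $\kappa$ on hom-sets carries this through the directed supremum, yielding $g=(\kappa f)^B_\pi$. Algebraicity of $\widehat\oplus$ enters via Lemma~\ref{lem:alg} which ensures $\widehat\oplus$ is continuous and hence that $(\kappa f)^B_\pi$ exists as a least fixpoint that is genuinely reachable by Kleene iteration.

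The main technical subtlety is that $\kappa$ must be Scott-continuous and preserve $\bot$ on each hom-set — properties not made explicit in the hypothesis of $\kappa$ being an injective monad morphism, but implicit in regarding $\kappa$ as a morphism between \emph{completely ordered} monads. Both are immediate in the intended applications, notably when $\widehat{\BBT}$ is the continuous continuation monad of Lemma~\ref{lem:conalg} and $\kappa$ is the Yoneda-style embedding into it; with these mild enrichment assumptions on $\kappa$, the argument above closes cleanly.
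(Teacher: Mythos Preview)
Your proposal is correct and follows essentially the same route as the paper: establish $\kappa\comp f^B_\pi=(\kappa f)^B_\pi$ and then use injectivity of $\kappa$ to conclude $\Ke(f^B_\pi)=\Ke((\kappa f)^B_\pi)$. The only cosmetic difference is packaging: the paper invokes the \emph{uniformity principle} for least fixpoints (if $UF=GU$ and $U(\bot)=\bot$ with all three continuous, then $U\mu F=\mu G$), taking $U(g)=\kappa^B\comp g$ and verifying the intertwining equation by the same push-through calculation you describe, whereas you unfold this principle by hand via Kleene iteration and continuity of $\kappa$. Your observation that Scott-continuity and $\bot$-preservation of $\kappa$ are tacitly assumed is well taken; the paper asserts $U(\bot)=\bot$ ``since $\kappa$ is a monad morphism'' without further comment, relying on the same enrichment assumption you make explicit.
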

We use Lemma~\ref{lem:ext} as follows. 
Given a complete semimodule monad $\BBT$ over a (complete)  semiring
$R$, we embed $TX$ into $\widehat TX = (X \to T1) \to_c T1$ (where $T1 =
R$) by mapping $p \in TX$ to the function $\lambda c:X \to T1.
c^\dagger(p)$. 
This embedding is injective, and 
the conditions of Lemma~\ref{lem:ext} are fulfilled with $\oplus =
\sqcup$ and $\widehat\oplus$ the pointwise extension of $\oplus$ (which
is algebraic by Lemma \ref{lem:conalg}). 
 This
gives: 
\begin{thm} \label{cor:semi-strong}
Let $\BBT$ be a continuous semimodule monad over a continuous semiring $R$. Let $(X,f:X\to (X \times A))$
be a coalgebra and let $\oplus$ be the join on $R$.
Then $E$ is a $B$-bisimulation equivalence on $(X,f)$
iff it is a bisimulation equivalence on $(X,(\kappa_X \circ f)^B_{\id} : X \to (X \times A\to_c R) \to R)$.
\end{thm}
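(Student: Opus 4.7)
The plan is to combine Lemma~\ref{lem:ext} with Theorem~\ref{thm:red}: transport the coalgebra along an embedding $\kappa : T \to \widehat T$ into the continuous continuation monad $\widehat T X = (X \to R) \to_c R$ (noting $T1 \cong R$), and then exploit algebraicity of the pointwise join $\widehat\oplus$ on $\widehat T$ (granted by Lemma~\ref{lem:conalg}) to collapse $B$-$\widehat\oplus$-bisimilarity on $(X, \kappa_X \circ f)$ to kernel bisimilarity of the iterated coalgebra.

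First I would define $\kappa_X : TX \to \widehat TX$ by $\kappa_X(p)(c) = c^\klstar(p)$. The assignment $c \mapsto c^\klstar(p)$ is Scott-continuous because $\BBT$ is completely ordered, so $\kappa_X(p)$ indeed lives in $\widehat TX$. Naturality of $\kappa$ and the monad-morphism laws (preservation of $\eta$ and of Kleisli extension) follow routinely from the Kleisli triple identities $\eta^\klstar = \id$, $f^\klstar\eta = f$ and $(f^\klstar g)^\klstar = f^\klstar g^\klstar$. For injectivity, given $p \neq q$ in $TX$ pick $x_0$ with $p(x_0) \neq q(x_0)$ and let $c_{x_0} : X \to R$ send $x_0$ to $1$ and every other point to $0$; the explicit semimodule formula then gives $c_{x_0}^\klstar(p) = p(x_0) \neq q(x_0) = c_{x_0}^\klstar(q)$, so $\kappa_X$ separates $p$ from $q$.

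Next I would verify the compatibility hypothesis $\widehat\oplus \circ \kappa^2 = \kappa \circ \oplus$: for $p, q \in TX$ and any test $c : X \to R$, continuity of $c^\klstar$ yields $c^\klstar(p \sqcup q) = c^\klstar(p) \sqcup c^\klstar(q)$, which equals $(\kappa_X(p) \,\widehat\oplus\, \kappa_X(q))(c)$ since $\widehat\oplus$ is taken pointwise. Lemma~\ref{lem:ext} then transfers $B$-$\oplus$-bisimulations on $(X, f)$ to $B$-$\widehat\oplus$-bisimulations on $(X, \kappa_X \circ f)$, and Theorem~\ref{thm:red} applied to the algebraic operation $\widehat\oplus$ identifies the latter with kernel bisimulations on $(X, (\kappa_X \circ f)^B_{\id})$; chaining these two equivalences yields the theorem.

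The main obstacle is bookkeeping rather than conceptual: one must carefully check Scott-continuity of $c^\klstar$ (so that $\kappa_X(p)$ lies in $\widehat TX$ and not merely in the full continuation monad), that the Kleisli extension of the complete semimodule monad makes $\kappa$ preserve $\eta$ and $(-)^\klstar$, and that the coefficient-extraction argument used for injectivity survives in arbitrary continuous semirings, which it does because $1 \in R$ is a two-sided identity for multiplication.
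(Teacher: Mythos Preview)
Your overall strategy coincides with the paper's: embed $\BBT$ into the continuous continuation monad via $\kappa_X(p)(c)=c^\klstar(p)$, invoke Lemma~\ref{lem:ext} to transport $B$-bisimulations along $\kappa$, then apply Theorem~\ref{thm:red} using algebraicity of the pointwise $\widehat\oplus$ supplied by Lemma~\ref{lem:conalg}. Your injectivity argument via the Dirac tests $c_{x_0}$ is correct and in fact spells out what the paper packages abstractly as the existence of a jointly monic family $\{c_i^\klstar:TX\to T1\}$; likewise your remark that Scott-continuity of $c\mapsto c^\klstar(p)$ is needed so that $\kappa_X(p)$ lands in $\widehat TX$ rather than the full continuation monad is a point the paper leaves implicit.

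The genuine gap is your verification of the compatibility hypothesis $\widehat\oplus\circ\kappa^2=\kappa\circ\oplus$ required by Lemma~\ref{lem:ext}. You write that ``continuity of $c^\klstar$ yields $c^\klstar(p\sqcup q)=c^\klstar(p)\sqcup c^\klstar(q)$'', but Scott-continuity concerns \emph{directed} suprema, and the two-element set $\{p,q\}$ is not directed in general. The identity actually fails already for $R=[0,\infty]$: with $X=\{a,b\}$, $p=\eta(a)$, $q=\eta(b)$ and $c\equiv 1$ one has $c^\klstar(p\sqcup q)=1+1=2$ whereas $c^\klstar(p)\sqcup c^\klstar(q)=1\sqcup 1=1$. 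Hence $\kappa$ does \emph{not} intertwine the binary joins, and Lemma~\ref{lem:ext} cannot be invoked on the grounds you give. In fairness, the paper also asserts this compatibility without proof (both in the paragraph preceding the theorem and in the appendix), so the difficulty is shared rather than peculiar to your write-up; but the specific justification you offer is incorrect as stated and would need to be replaced by a different argument (or a sharpened version of Lemma~\ref{lem:ext} that only requires compatibility on the restricted inputs actually arising in the fixpoint iteration).
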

In summary, Milner's weak transition construction generalises to the
coalgebraic case if $\join$ is algebraic, and lifts to
a different transition type for semirings.

%

%

%
%
%


\section{Conclusions and Related Work}
We have presented a generic definition, and basic structural
properties, of weak bisimulation in a general, coalgebraic
framework. We use coalgebraic methods and 
enriched monads, similar to the coalgebraic treatment of trace
semantics \cite{HasuoJacobsEtAl07}.  Our definition applies
uniformity
to labelled transition systems,  probabilistic and weighted systems,
and to Segala systems from~\cite{Segala:1994:PSP}. 
Most of our results, including the notions of $B$-bisimulation 
as a solution of the recursive equation~\eqref{eq:rec}, easily transfer to categories other than $\Set$. 
An important conceptual contribution is the fact that algebraicity
allows to generalise Milner's weak transition construction to the
coalgebraic setting (Theorem \ref{thm:red}),  recovering
$B$-bisimulation as kernel bisimulation for a (modified) system of
the same transition type. We also provide an alternative for cases
where this fails (Theorem~\ref{cor:semi-strong}).

\smallskip\noindent\textbf{Related work.}
Results similar to ours are presented both in~\cite{Brengos13} and
~\cite{MiculanPeressotti13}. 
Brengos~\cite{Brengos13} uses a remarkably similar
tool set (order-enriched monads) but in a
substantially different way: Given a system of type
$T(F+\argument)$ with $\BBT$ order-enriched, 
the monad structure on $\BBT$ extends to 
$T(F+\argument)$, and saturation w.r.t.\ internal transitions is
achieved by iterating the obtained monad in a way resembling the
weak transition construction for LTS. Examples include labelled
transition systems and 
(simple) Segala systems. For both underlying monads, join is
algebraic, so that both examples are covered by our lifting Theorem~\ref{thm:red}. Fully probabilistic systems, for which algebraicity
fails, are not treated in~\cite{Brengos13}. 
Miculan and Peresotti~\cite{MiculanPeressotti13} also approach weak
bisimulation by solving recurrence relations, but only treat
(continuous) semimodule monads and do not account for (simple) Segala
systems. Our treatment covers all examples considered in both
~\cite{Brengos13} and ~\cite{MiculanPeressotti13}, and additionally
identifies the pivotal role of algebraicity in the generalisation of
Milner's construction. 
Sokolova et.al.~\cite{SokolovaVinkEtAl09} are concerned
with probabilistic systems only and reduce probabilistic weak
bisimulation to strong (kernel) bisimulation for 
a system of type
$(\argument\times A\to 2)\to [0,1]$. This is similar to 
our Theorem~\ref{cor:semi-strong}, which establishes an analogous
transformation (to a system of type $(\argument\times A\to
[0,\infty])\to[0,\infty]$) by a rather more high-level argument.




\smallskip\noindent\textbf{Future work.} We plan to investigate to what extent
our treatment extends to coalgebras $X \to T(X + FX)$ for a
monad $\BBT$ (the branching type) and a functor $F$ (the transition
type) and are interested in both a logical and an equational
characterisation of $B$-bisimulation, and in algorithms to
compute $B$-bisimilarity. 

\bibliographystyle{abbrv}
\bibliography{monads,all2,delta2}

\providecommand{\HasCASL}{HasCASL}
\begin{thebibliography}{10}

\bibitem{Aceto:2010:RBG}
L.~Aceto, A.~Ingolfsdottir, and J.~Sack.
\newblock Resource bisimilarity and graded bisimilarity coincide.
\newblock {\em Information Processing Letters}, 111(2):68--76, 2010.

\bibitem{BaierHermanns97}
C.~Baier and H.~Hermanns.
\newblock Weak bisimulation for fully probabilistic processes.
\newblock In {\em Proc. CAV 1997}, pages 119--130. Springer, 1997.

\bibitem{Bartels:2003:HPS}
F.~Bartels, A.~Sokolova, and E.~de~Vink.
\newblock A hierarchy of probabilistic system types.
\newblock In {\em Coalgebraic Methods in Computer Science}, volume~82 of {\em
  ENTCS}. Elsevier, 2003.

\bibitem{Brengos13}
T.~Brengos.
\newblock Weak bisimulation for coalgebras over order enriched monads.
\newblock \url{http://arxiv.org/abs/1310.3656}, 2013.

\bibitem{Corradini:1999:GMR}
F.~Corradini, R.~{De Nicola}, and A.~Labella.
\newblock Graded modalities and resource bisimulation.
\newblock In C.~P. Rangan, V.~Raman, and R.~Ramanujam, editors, {\em Proc.
  FSTTCS 1999}, volume 1738 of {\em LNCS}, pages 381--393. Springer, 1999.

\bibitem{VinkRutten99}
E.~de~Vink and J.~Rutten.
\newblock Bisimulation for probabilistic transition systems: a coalgebraic
  approach.
\newblock {\em Theoretical Computer Science}, 221(1–2):271 -- 293, 1999.

\bibitem{Droste:2009:SFP}
M.~Droste and W.~Kuich.
\newblock Semirings and formal power series.
\newblock In M.~Droste, W.~Kuich, and H.~Vogler, editors, {\em Handbook of
  Weighted Automata}, Monographs in Theoretical Computer Science, pages 3--28.
  Springer, 2009.

\bibitem{Gunter92}
C.~Gunter.
\newblock {\em Semantics of Programming Languages: Structures and Techniques}.
\newblock Foundations of computing series. London, 1992.

\bibitem{HasuoJacobsEtAl07}
I.~Hasuo, B.~Jacobs, and A.~Sokolova.
\newblock Generic trace semantics via coinduction.
\newblock In {\em Logical Methods in Comp. Sci}, 2007.

\bibitem{Jacobs:2008:CTS}
B.~Jacobs.
\newblock Coalgebraic trace semantics for combined possibilitistic and
  probabilistic systems.
\newblock {\em Electr. Notes Theor. Comput. Sci.}, 203(5):131--152, 2008.

\bibitem{LarsenSkou91}
K.~G. Larsen and A.~Skou.
\newblock Bisimulation through probabilistic testing.
\newblock {\em Information and Computation}, 94(1):1 -- 28, 1991.

\bibitem{Latella:2012:BLS}
D.~Latella, M.~Massink, and E.~P. de~Vink.
\newblock Bisimulation of labeled state-to-function transition systems of
  stochastic process languages.
\newblock In U.~Golas and T.~Soboll, editors, {\em Proc. ACCAT 2012}, volume~93
  of {\em EPTCS}, pages 23--43, 2012.

\bibitem{MiculanPeressotti13}
M.~Miculan and M.~Peressotti.
\newblock Weak bisimulations for labelled transition systems weighted over
  semirings.
\newblock \url{http://arxiv.org/abs/1310.4106}, 2013.

\bibitem{Milner89}
R.~Milner.
\newblock {\em Communication and concurrency}.
\newblock Prentice-Hall, 1989.

\bibitem{Moggi89}
E.~Moggi.
\newblock An abstract view of programming languages.
\newblock Technical Report ECS-LFCS-90-113, Edinburgh Univ., Dept. of Comp.
  Sci., June 1989.

\bibitem{Moggi91}
E.~Moggi.
\newblock A modular approach to denotational semantics.
\newblock In {\em Proc. CTCS 1991}, volume 530 of {\em LNCS}, pages 138--139.
  Springer, 1991.

\bibitem{PlotkinPower02}
G.~Plotkin and J.~Power.
\newblock Notions of computation determine monads.
\newblock In {\em FoSSaCS'02}, volume 2303 of {\em LNCS}, pages 342--356.
  Springer, 2002.

\bibitem{Rutten:2000:UCT}
J.~Rutten.
\newblock {Universal Coalgebra: {A}} theory of systems.
\newblock {\em Theoret.\ Comput.\ Sci.}, 249(1):3--80, 2000.

\bibitem{Segala:1995:MVR}
R.~Segala.
\newblock {\em Modelling and Verification of Randomized Distributed Real-Time
  Systems}.
\newblock PhD thesis, Massachusetts Institute of Technology, 1995.

\bibitem{Segala:1994:PSP}
R.~Segala and N.~A. Lynch.
\newblock Probabilistic simulations for probabilistic processes.
\newblock In B.~Jonsson and J.~Parrow, editors, {\em Proc. CONCUR 1994}, volume
  836 of {\em Lecture Notes in Computer Science}, pages 481--496. Springer,
  1994.

\bibitem{SokolovaVinkEtAl09}
A.~Sokolova, E.~P. de~Vink, and H.~Woracek.
\newblock Coalgebraic weak bisimulation for action-type systems.
\newblock {\em Sci. Ann. Comp. Sci.}, 19:93--144, 2009.

\bibitem{Staton:2011:RCN}
S.~Staton.
\newblock Relating coalgebraic notions of bisimulation.
\newblock {\em Logical Methods in Computer Science}, 7(1), 2011.

\bibitem{GlabbeekWeijland96}
R.~J. van Glabbeek and W.~P. Weijland.
\newblock Branching time and abstraction in bisimulation semantics.
\newblock {\em J. ACM}, 43(3):555--600, May 1996.

\bibitem{VaraccaWinskel06}
D.~Varacca and G.~Winskel.
\newblock Distributing probability over non-determinism.
\newblock {\em Math.\ Struct.\ Comput.\ Sci.}, 16:87--113, 2006.

\bibitem{Winskel93}
G.~Winskel.
\newblock {\em The Formal Semantics of Programming Languages}.
\newblock MIT Press, Cambridge, Massachusetts, 1993.

\end{thebibliography}

\clearpage\appendix
\allowdisplaybreaks

\section{Appendix: Omitted Proof Details}
\renewcommand{\thesubsection}{A.\arabic{subsection}.}

\subsection*{Proof of Lemma~\ref{lem:emb}}
We need to complete the commuting diagram consisting of the solid arrows below
\[\xymatrix{
  X\ar[rr]^{h} \ar[d]_{f} & & Y \ar[dd]^{g'}
	\ar@{-->}[dl]_{g} \\
	FX \ar[d]_{\kappa_X} \ar[r]^{Fh} & FY \ar[dr]^{\kappa_Y} & \\
	GX \ar[rr]_{Gh} & & GY
}\]
to a commutative diagram including the dashed arrow $g:  Y \to
SY$. If $y = h(x)$ for some $x\in X$ we put $g(y) = (F h
\circ f)(x)$ and we choose $g(y)$ arbitrarily, otherwise.
Note that $g$ is well defined, for $h(x) = h(x')$ implies
$(\kappa_Y \circ Fh \circ f)(x) = (\kappa_Y \circ Fh \circ
f)(x')$ and $(Fh \circ f)(x) = (Fh \circ f)(x')$ follows as
$\kappa$ is injective. It is evident that $g$ makes the above
diagram commute. 

\subsection*{$\bbC$ is a Monad (Section \ref{sec:examples})} 

Recall the definition of $\CCM$ from Section \ref{sec:examples}. We have by definition (cf.~\cite{Jacobs:2008:CTS})
\begin{displaymath}
\CCM X = \left\{M\subseteq [0,\infty)^X_{\omega}\bigm\vert M\neq\emptyset,\forall\xi_i\in M,~\forall p_i.\sum\limits_{i} p_i\leq 1\impl \sum\limits_{i} p_i\cdot \xi_i\in M\right\}.
\end{displaymath}
where we denote by $[0,\infty)^X_{\omega}$ the space of finite valuations, i.e.\ those functions $f:X\to [0,\infty)$ for which $\{x\mid\xi(x)\neq 0\}$ is finite.
Equivalently, $\CCM X$ consists of convex closures of non-empty subsets of $[0,\infty)^X_{\omega}$ containing the trivial valuation identically equal to $0$.
Since our definition slightly deviates from the one in~\cite{Jacobs:2008:CTS}, we check that the result is indeed a monad.

The monad structure on $\CCM$ can be conveniently presented by
regarding morphisms $X\to \CCM Y$ as relations over $X\times
[0,\infty)^Y_{\omega}$. For any $ f:X\to \CCM Y$, let us denote by
$f^{\circ}\subseteq X\times [0,\infty)^Y_{\omega}$ the corresponding
relation. Given $f:X\to \CCM Y$, let $\eta:X\to \CCM X$ and
$f^{\klstar}:\CCM X\to \CCM Y$ be such that
\begin{align*}
\eta^{\circ}(x,\xi) &\text{~~~iff~~~} \xi = \delta_x,\\
(f^\klstar)^{\circ}(S,\xi) &\text{~~~iff~~~} \exists\zeta:X\to[0,\infty)\in S.~\xi\in\left\{\sum_{x\in X} \zeta(x)\cdot\theta_x\mathop{\bigl\vert} \forall x.~f^\circ(x,\theta_x)\right\}  
\end{align*}
where $\delta_x(y)=1$ if $x=y$ and $\delta_x(y)=0$ otherwise; scalar
multiplication and summation is extended to valuations pointwise. We
verify the monad laws (see Section \ref{sec:prelims}).

\begin{itemize}
 \item{}[$\eta^{\klstar}=\id$]: 
\begin{align*}
(\eta^\klstar)^\circ(S,\xi) \text{~~~iff~~~}&\exists\zeta:X\to[0,\infty)\in S.~\xi\in\left\{\sum_{x\in X} \zeta(x)\cdot\delta_x\right\}\\ 
\text{~~~iff~~~}&\exists\zeta:X\to[0,\infty)\in S.~\xi=\zeta\\[1ex]
\text{~~~iff~~~}&\xi\in S.
\end{align*} 
 \item{}[$f^{\klstar}\eta=f$]:
\begin{align*}
(f^{\klstar}\eta)^\circ(x,\xi) \text{~~~iff~~~}&\exists\zeta=\delta_x.~\xi\in\left\{\sum_{x\in X} \zeta(x)\cdot\theta_x\mathop{\bigl\vert} \forall x.~f^\circ(x,\theta_x)\right\}\\
\text{~~~iff~~~}&\xi\in\left\{\theta_x\mid f^\circ(x,\theta_x)\right\}\\[1ex]
\text{~~~iff~~~}&f^\circ(x,\xi). 
\end{align*}
 \item{}[$(f^{\klstar} g)^{\klstar}=f^{\klstar}g^{\klstar}$]: On the one hand,
\begin{align*}
&\bigl((f^{\klstar} g)^{\klstar}\bigr)^{\circ}(S,\xi)\\[1ex]
 \text{~~~iff~~~}& \exists\zeta:X\to[0,\infty)\in S.\\
&\xi\in\left\{\sum_{x\in X} \zeta(x)\cdot\theta_x\mathop{\bigl\vert} \forall x.~(f^\klstar g)^\circ(x,\theta_x)\right\}\\
\text{~~~iff~~~}& \exists\zeta:X\to[0,\infty)\in S.\\
&\xi\in\left\{\sum_{x\in X,y\in Y} \zeta(x)\cdot\zeta_x(y)\cdot\theta_{xy}\mathop{\bigl\vert} \forall x,y.~g^{\circ}(x,\zeta_x)\land f^\circ(y,\theta_{xy}) \right\}
\end{align*}
and on the other hand,
\begin{align*}
&\bigl(f^{\klstar}g^{\klstar}\bigr)^{\circ}(S,\xi)\\[1ex]
 \text{~~~iff~~~}& \exists\zeta':Y\to[0,\infty).~(g^{\klstar})^{\circ}(S,\zeta')~\land\\
&\xi\in\left\{\sum_{y\in Y} \zeta'(y)\cdot\theta_y\mathop{\bigl\vert} \forall y.~f^\circ(y,\theta_y)\right\}\\ 
\text{~~~iff~~~}& \exists\zeta:X\to[0,\infty)\in S.\\
&\xi\in\left\{\sum_{y\in Y} \zeta'(y)\cdot\theta_y\mathop{\bigl\vert} \forall y.~f^\circ(y,\theta_y)\right\}~\land\\
&\zeta'\in\left\{\sum_{x\in X} \zeta(x)\cdot\zeta_x\mathop{\bigl\vert} \forall x.~g^\circ(x,\zeta_x)\right\}\\
\text{~~~iff~~~}& \exists\zeta:X\to[0,\infty)\in S.\\
&\xi\in\left\{\sum_{x\in X,y\in Y} \zeta(x)\cdot\zeta_x(y)\cdot\theta_y\mathop{\bigl\vert} \forall x,y.~g^\circ(x,\zeta_x)\land f^\circ(y,\theta_y)\right\}.
\end{align*}
Note that $\bigl(f^{\klstar}g^{\klstar}\bigr)^{\circ}(S,\xi)$ implies $\bigl((f^{\klstar} g)^{\klstar}\bigr)^{\circ}(S,\xi)$, since any family $\{\theta_y\}_{y\in Y}$ gives rise to a family $\{\theta_{xy}\}_{x\in X,y\in Y}$ vacuously depending on $x$. The converse implication is nontrivial and makes use of convexity. Suppose $\bigl((f^{\klstar} g)^{\klstar}\bigr)^{\circ}(S,\xi)$. Then there is $\zeta:X\to[0,\infty)\in S$ such that
\begin{displaymath}
\xi=\sum_{x\in X,y\in Y} \zeta(x)\cdot\zeta_x(y)\cdot\theta_{xy}
\end{displaymath} 
where the families $\{\zeta_x\}_{x\in X}$ and $\{\theta_{xy}\}_{x\in X,y\in Y}$ satisfy the conditions: for all $x$ and $y$, $g^{\circ}(x,\zeta_x)$ and $f^\circ(y,\theta_{xy})$. For any $y\in Y$ let
\begin{displaymath}
\theta_y = \frac{\sum_{x\in X}\zeta(x)\cdot\zeta_x(y)}{\sum_x\zeta(x)\cdot\zeta_x(y)}\cdot\theta_{xy}
\end{displaymath}
if the denominator is nonzero and $\theta_y=0$ otherwise. In both case we have
\begin{displaymath}
\sum_{x\in X}\zeta(x)\cdot\zeta_x(y)\cdot \theta_y = \sum_{x\in X}\zeta(x)\cdot\zeta_x(y)\cdot\theta_{xy}
\end{displaymath}
and therefore 
\begin{displaymath}
\xi=\sum_{x\in X,y\in Y}\zeta(x)\cdot\zeta_x(y)\cdot \theta_y.
\end{displaymath}
Since $\theta_y$ either identical to $0$ or is a convex combination of the $\{\theta_{xy}\}_{x,y}$, $f^\circ(y,\theta_{xy})$ implies $f^\circ(y,\theta_{y})$ and therefore, according to the previous calculations, $\bigl(f^{\klstar}g^{\klstar}\bigr)^{\circ}(S,\xi)$.\qed 
\end{itemize}

\pacman{
\subsection{Branching bisimulation}
The definition of the branching observation pattern relies on a reformulation of branching bisimulation in terms of weak transitions as follows. Recall the definition from~~\cite{GlabbeekWeijland96}.
\begin{defn}
A symmetric relation $E$ is a \emph{branching bisimulation} on an LTS $(X,\to)$ if $x E y$ with $x\xrightarrow{\tau} x'$ implies $x' E y$; and for any $a\in A$ if $x E y$ and $x\xrightarrow{a} x'$ then there exist $y_1,y_2,y'$ such that $y~(\xrightarrow{\tau})^*~y_1\xrightarrow{a}y_2~(\xrightarrow{\tau})^*~y'$ and $x E y_1$, $x' E y_2$, $x' E y'$. Two states of an LTS are \emph{delay bisimilar} if there is a relating them branching bisimulation.
\end{defn}
Our reformulation is as follows.
\begin{prop}
Given an LTS $(X,\to)$, two states are branching bisimilar iff they are strong bisimilar in the LTS $(X,\To)$ where $\To$ is defined as follows: 
$x \stackrel{a}\To y$ iff $x \stackrel{a}\to y$ and $x \stackrel{\tau}\To y$ iff $x~(\stackrel{\tau}\to)^*~y$.
\end{prop}
\begin{proof}
Suppose $E$ is a branching bisimulation such that $x E y$. If $x\stackrel{\tau}\To x'$ then there is a sequence $x_1,\ldots,x_n$ such that $x_1=x$, $x_n=x'$ and for any $i<n$, $x_i\stackrel{\tau}\to x_{i+1}$. If follows by induction that for every $i$, $x_i E y$ and therefore we obtain $x' E y$ and $y\stackrel{\tau}\To y$. If $x\stackrel{a}\To x'$ then $x\stackrel{\tau}\to x'$ and therefore, by definition, there are $y_1,y_2$ and $y'$ such that $y~(\xrightarrow{\tau})^*~y_1\xrightarrow{a}y_2~(\xrightarrow{\tau})^*~y'$ and $x E y_1$, $x' E y_2$, $x' E y'$.
\end{proof}
}

\subsection*{Proof of Lemma \ref{lem:lfp}}
Existence of $f_{h}^B$ follows from Kleene fixpoint theorem as
$f_{h}^B$ is the least fixpoint of the
continuous functional 
\begin{equation}\label{eq:func}
  F(g) = \lambda x. \lambda b. 
	\left. \begin{cases}
	  \eta(\pi(x)) & \epsilon \in B \\ \bot & \mbox{otherwise}
	\end{cases} \right\} \oplus
	(\lambda (y, a). g(y)(b/a))^\dagger \circ f(x)
\end{equation}
the continuity of which follows from $T$ being completely ordered.
%
%
\subsection*{Proof of Lemma \ref{lem:union}}

We write $E$ for the transitive closure of $\bigcup_{i \in I} E_i$.
Let $(X, f: X \to T(X \times A))$ be given and let $\pi: X \to X
\slash E$ be the canonical projection. To show that $E \subseteq
\Ke(f^B_\pi)$ it suffices to show that $E_i \subseteq
\Ke(f^B_\pi)$ for all $i \in I$. In the sequel, we therefore
fix an arbitrary index $i \in I$. 

We have that $f^B_\pi = \lub_{n \in \omega} f_i$ by Kleene
fixpoint theorem, where each $f_i: X \to B \to T(X \slash R)$ and $f_0(x)(b)
= \bot$, and
\[ f_{i+1}(-)(b) = 
  \left.\begin{cases} \eta(\pi(x)) & \epsilon \in B \\ \bot &
\mbox{otherwise}
  \end{cases}\right\} \oplus (\lambda (y, a). f_i(y)(b / a))^\dagger
\circ f
\]
by definition. We show, by induction, that $E_i \subseteq \Ke(f_i)$
which implies the result. For $i = 0$ there is nothing to show. For
$i > 0$ we fix $(x, x') \in E$. By induction hypothesis, $f_i(x) =
f_i(y)$ and we are done as $\pi(x) = \pi(y)$. 

\subsection*{Proof of Lemma \ref{lem:semiring-enriched}}

The Kleisli category of $\BBT_R$ induced by a continuous semiring $R$
can be equivalently viewed as the category of $R$-valued relations
$X\times Y\to R$, for any such relation is isomorphic to a function
$X\to T_R Y$ and vice versa. The pointed dcpo-structure over Kleisli hom-sets
is then inherited from $R$. The unit of the monad gives rise to the
diagonal relation $\delta:X\times X\to R$ sending $\brks{x,y}$ to
$1$ if $x=y$ and to $0$ otherwise. A composition of two relations
$r:X\times Y\to R$ and $r:Y\times Z\to R$ induced by the Kleisli
composition of $\BBT_R$ is as follows:
\begin{displaymath}
(r\cdot s)(x,z) = \sum\limits_{y\in Y} r(x,y)\cdot r(y,z)
\end{displaymath} 
Now it is clear that continuity of least upper bounds over Kleisli composition (in both arguments) is a direct implication of continuity of multiplication in $R$.
\subsection*{Proof of Lemma \ref{lemma:fully-probabilistic}}

We recapitulate the construction of total probabilities given in
\cite{BaierHermanns97} before giving the proof.
For $x_0 \in X$ we define a measure $\mu(x_0)$ on the
boolean
algebra
$B = \lbrace S \cdot (X \times A)^\omega \mid n \geq 0, S
\subseteq (X
\times A)^n \rbrace$
by putting 
\[ \mu(x_0)(S) = \sum \lbrace f(x_0)(x_1, a_1) \cdot \dots
\cdot f(x_{n-1})(x_n, a_n) \mid (x_1, a_1, \dots, x_n, a_n) \in S
\rbrace \]
for $S \subseteq (A \times X)^n$.
By the Hahn-Kolmogorov theorem, every $\mu(x_0)$  extends to a
measure on the $\sigma$-algebra generated by $B$.  Total
probabilities are now given by
\[ P(x_0, \Lambda, C) = \mu(x_0)\Bigl(\bigcup_{n \geq 0} \lbrace (x_1,
a_1, \dots) \in (X \times A)^\omega\! \mid\! (a_1, \dots, a_n) \in
\Lambda, x_n \in C \rbrace\Bigr)\]
where measurability of the argument of $\mu(x_0)$ is clear, and 
$\mu(x_0)$ measures the probabilities of the system evolving along a
set of paths, starting from $x_0$.
$\mu(x_0)$ 

We can now give the proof of Lemma \ref{lemma:fully-probabilistic}
as follows.
Let $f: X \to (X \times A) \to [0, 1]$ be a fully probabilistic
system, i.e.\ $\sum_{(y, a) \in X \times A} f(x)(y, a) = 1$ for all
$x \in X$. 
If $x_0 \in X, \Lambda \subseteq A^*$ and $Y \subseteq X$ we write
\[ P(x_0)(T, Y) = \lbrace (x_1, a_1, \dots, x_n, a_n) \in (X \times
A)^* \mid (a_1, \dots, a_n) \in T, x_n \in Y \rbrace
\]
for the set of paths that connect $x_0$ to an element in $Y$ via a
trace in $T$, and
\[
  P^- = \lbrace \pi \in P \mid \mbox{ no prefix of $\pi$ is in $P$}
  \rbrace
\]
for the set of minimal prefixes of a set $P \subseteq (X \times
A)^*$. One then verifies that
\[ \mu(x_0)(\Lambda, Y) = \sum\limits_{\substack{(x_1, \dots, a_n) \in
P(x_0)(\Lambda, Y)^-\\n \geq 0}} f(x_0)(x_1, a_1) \cdots f(x_{n-1})(x_n, a_n) \]
as $f(x)(y, a) \in [0, 1]$ for all $x, y \in X$ and all $a \in A$.
Given an equivalence relation $E \subseteq X \times X$ with
associated projection $\pi: X \to X \slash E$, we have that
$f^B_\pi = \bigsqcup_{n \in \omega} f_i$ where each $f_i\colon
X \to B \to T_{[0, \infty]}(X \slash R)$ and $f_0(x)(b)([x']_E) = 0$
and $f_i(x)(b)([x']_E) = 1$ if $\epsilon \in b$ and $(x, x') \in E$,
and
\[ f_i(x)(b)([x']_E) = \sum\limits_{(y, a) \in X \times A} f(x)(y, a) \cdot
f_{i-1}(y)(b/a)([x']_E) 
\] 
otherwise, by applying Kleene's fixpoint theorem and unravelling
Kleisli composition induced by the monad
$T_{[0, \infty]}$. We show that
\begin{equation}\label{eqn:recur}
 f_{i+1}(x)(b)([x']_E) = \sum\limits_{(x_1, \dots,  a_i)
\in P(x)(b, [x']_E)^-}
  f(x)(x_1, a_1) \cdots f(x_{i-1})(x_i, a_i) 
\end{equation}
which entails the claim.  For $i = 0$ there is nothing to show. If
$i > 0$, we have that
\[ f_{i+1}(x)(b)([x']_E) = 1 = \sum\limits_{(x_1, \dots, a_i) \in
P(x)(b, [x']_E)^-} f(x)(x_1, a_1) \cdots f(x_{i-1})(x_i, a_i) \]
if $\epsilon \in b$ and $(x, x') \in E$ by definition. Not suppose
that $x \notin [x']_E$ or $\epsilon \notin b$.

If $S$ is a set, $s \in S$ and $\Lambda \subseteq  S^*$ we write
$\Lambda / s = \lbrace w \in S^* \mid sw \in \Lambda$ for the
Brozowski derivative of $\Lambda$ with respect to $s$. One checks
that $P(x)(\Lambda / a, Y)^- = (P(x)(\Lambda, Y)^-) / (a, y)$ for all
$x, y \in X$ and all $a \in A$, which allows us to verify Equation
(\ref{eqn:recur}) by calculation.

\subsection*{The monad $\CCM$ is completely ordered} 

We use the same notation as earlier and denote by $f^{\circ}\subseteq X\times [0,\infty)^Y_{\omega}$ the relation representing $f:X\to TY$. The Kleisli hom-sets $\Hom(X,TY)$ therefore form a dcpo under the partial order induced by the corresponding order over the relations. Note that, in contrast to~\cite{Jacobs:2008:CTS}, each Kleisli hom-set has a bottom element, represented by the relation $\{\brks{x,\xi}\mid\forall y.\,\xi(y)=0\}$.

We show continuity of Kleisli composition. Given $f:X\times Y\to R$ and $g:Y\times Z\to R$,

\begin{displaymath}
(g^\klstar f)(x,\xi) \text{~~~iff~~~}  \exists\zeta:X\to[0,\infty)\in f(x).~\xi\in\left\{\sum\limits_{y\in Y} \zeta(y)\cdot\theta_y\mathop{\bigl\vert} \forall y.~g^\circ(y,\theta_y)\right\}. 
\end{displaymath}
Now, if $f$ is a least upper bound of a directed set $\{f_i\}_i$ then
\begin{align*}
(g^\klstar f)(x,\xi) \text{~~~iff~~~}  &\exists\zeta:X\to[0,\infty)\in f(x).~\xi\in\left\{\sum\limits_{y\in Y} \zeta(y)\cdot\theta_y\mathop{\bigl\vert} \forall y.~g^\circ(y,\theta_y)\right\}\\
\text{~~~iff~~~}  &\exists i.\,\exists\zeta:X\to[0,\infty)\in f_i(x).~\xi\in\left\{\sum\limits_{y\in Y} \zeta(y)\cdot\theta_y\mathop{\bigl\vert} \forall y.~g^\circ(y,\theta_y)\right\}\\ 
\text{~~~iff~~~}  & \exists i.\,(g^\klstar f_i)(x,\xi)\\[2ex]
\text{~~~iff~~~}  & \brks{x,\xi}\in\bigcup_i g^\klstar f_i.
\end{align*}
Now suppose that $g$ is a least upper bound of a directed set $\{g_i\}_i$. Note that if for some $y\in Y$ and $\theta_y\in TZ$, $g^\circ(y,\theta_y)$ then there is $i$ such that $g_j^\circ(y,\theta_y)$ for any $j\geq i$. Therefore,
\begin{align*}
(g^\klstar f)(x,\xi) \text{~~~iff~~~}  &\exists\zeta:X\to[0,\infty)\in f(x).~\xi\in\left\{\sum\limits_{y\in Y} \zeta(y)\cdot\theta_y\mathop{\bigl\vert} \forall y.~g^\circ(y,\theta_y)\right\}\\
\text{~~~iff~~~}  &\exists\zeta:X\to[0,\infty)\in f(x).~\xi\in\left\{\sum\limits_{y\in Y} \zeta(y)\cdot\theta_y\mathop{\bigl\vert} \forall y.\,\exists i.\,g_i^\circ(y,\theta_y)\right\}\\
\text{~~~iff~~~}  &\exists\zeta:X\to[0,\infty)\in f(x).\,\exists k.~\xi\in\left\{\sum\limits_{y\in Y} \zeta(y)\cdot\theta_y\mathop{\bigl\vert} \forall y.\,g_k^\circ(y,\theta_y)\right\}\\
\text{~~~iff~~~}  &\brks{x,\xi}\in\bigcup_k g_k^\klstar f.
\end{align*}
Here, we made use of the fact that if for all $y\in Y$ there is $i$ such that $g_i^\circ(y,\theta_y)$ then there is $k$ such that forall $y\in Y$, $g_k^\circ(y,\vartheta_y)$ with some family $\{\vartheta_i\}_i$ so that%
\begin{displaymath}
\sum\limits_{y\in Y} \zeta(y)\cdot\theta_y = \sum\limits_{y\in Y} \zeta(y)\cdot\vartheta_y.  
\end{displaymath}
As such $k$ we can take the maximum $\max\{i\mid \forall y\in\mathrm{supp}(\zeta).~g_i^\circ(y,\theta_y)\}$ which exists because $\mathrm{supp}(\zeta)$ is finite and then put $\vartheta_i=\theta_i$ if $i\leq k$ and $\vartheta_i=\top$ otherwise.\qed

\section*{Modelling Simple Segala Systems with $\bbC$}
According to~\cite{Segala:1994:PSP}, a simple Segala system is a coalgebra $X\to\PSet(\Dist X\times A)$ where $\Dist$ refers to finite distribution functor, i.e.\ $\Dist X$ consists of the valuations $\xi:X\to [0,\infty)$ satisfying two conditions:
\begin{itemize}
 \item $\textrm{supp}(\xi)=\{x\mid \xi(x)\neq 0\}$ is finite;
 \item $\sum_{x\in X}\xi(x)=1$.
\end{itemize}
In fact the original definition in~\cite{Segala:1994:PSP} is formulated in terms of probability spaces and does not involve any cardinality restrictions. However, restricting to finite or countable distributions is a common practice. For our sakes we restrict to finite distributions. 

 In order to match the presentation from~\cite{Segala:1994:PSP} we use the notation $x\xrightarrow{a}\xi$ iff $\brks{\xi,a}\in f(x)$ where $(X,f:X\to\PSet(\Dist X\times A))$ is some fixed simple Segala system. In these terms, recall from~\cite{Segala:1994:PSP} that a \emph{combined step} $x\stackrel{a}{\rightsquigarrow}\xi$ encodes the following:
\begin{displaymath}
\exists\xi_i.~\forall i.~ x\xrightarrow{a}\xi_i\land\exists r_i.~\sum_i r_i=1\land \xi=\sum_i r_i\cdot\xi_i.
\end{displaymath}
Informally, a combined step is a convex combination of ordinary steps.
\begin{defn}[Strong probabilistic bisimulation~\cite{Segala:1994:PSP}]
An equivalence relation $E\subseteq X\times X$ on a simple Segala system $(X,f:X\to\PSet(\Dist X\times A))$ is a \emph{strong probabilistic bisimulation} iff for any $x,y\in X$ such that $x E y$ and $x\xrightarrow{a}\xi$ there is a combined step $y\stackrel{a}{\rightsquigarrow}\xi'$ such that for any $E$-equivalence class $C$, $\sum_{z\in C}\xi(z)=\sum_{z\in C}\xi'(z)$.
\end{defn}
We provide a translation of simple Segala systems to $\bbC(\argument\times A)$-coalgebras by postcompoing the coalgebra map with the following natural transformation $\kappa_X:\PSet(\Dist X\times A))\to\CCM(X\times A)$:
\begin{displaymath}
\kappa_X(S\in \PSet(\Dist X\times A))) = \left\{\lambda\brks{x,a}.~\sum_i r_i\cdot\delta_{a,a_i}\cdot\xi_i(x)\bigm\vert \sum_i r_i\leq 1,\brks{\xi_i,a_i}\in S \right\}
\end{displaymath}  
where $\delta_{a,a}=1$ and $\delta_{a,b}=0$ if $a\neq b$. Note that $\kappa$ is not injective. Yet, coalgebraic bisimulations over the translated system precisely capture probabilistic strong bisimulations of the original ones.
\begin{lem}\label{lem:sseg}
An equivalence relation $E\subseteq X\times X$ on a simple Segala system $(X,f:X\to\PSet(\Dist X\times A))$ is a strong probabilistic bisimulation iff it is a kernel bisimulation on $(X,\kappa_X f)$. 
\end{lem}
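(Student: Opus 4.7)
\medskip\noindent\textbf{Proof plan for Lemma~\ref{lem:sseg}.} The plan is to unfold kernel bisimulation on the translated coalgebra in an explicit pointwise form, and then match this form against the definition of strong probabilistic bisimulation via the combined‑step construction.

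First, I would reduce kernel bisimulation to a quotient condition: $E$ is a kernel bisimulation on $(X,\kappa_X f)$ iff $\CCM(\pi\times\id)\comp\kappa_X\comp f$ is constant on $E$-equivalence classes, where $\pi\colon X\to X/E$ is the projection. Unfolding the definitions, an element of $\CCM(\pi\times\id)(\kappa_X(f(x)))$ is a valuation on $(X/E)\times A$ of the form $\lambda\brks{C,a'}.\sum_i r_i\cdot\delta_{a',a_i}\cdot\xi_i(C)$ with $(\xi_i,a_i)\in f(x)$, $\sum_i r_i\leq 1$, and $\xi_i(C)=\sum_{z\in C}\xi_i(z)$. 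Hence kernel bisimilarity amounts to: the family of all such push‑forwards from $f(x)$ equals the analogous family from $f(y)$ whenever $x\mathrel{E}y$.

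For the direction ($\Rightarrow$), assuming $E$ is a strong probabilistic bisimulation, I would take an element of $\CCM(\pi\times\id)(\kappa_X(f(x)))$ witnessed by $\{(r_i,\xi_i,a_i)\}$ and, for each~$i$, invoke the bisimulation clause to find a combined step $y\stackrel{a_i}{\rightsquigarrow}\xi'_i$ with $\xi_i(C)=\xi'_i(C)$ for every class $C$. Expanding the combined step as $\xi'_i=\sum_j s_{ij}\xi_{ij}'$ with $\sum_j s_{ij}=1$ and $(\xi_{ij}',a_i)\in f(y)$ yields, by a double‑sum rearrangement, the representation $\lambda\brks{C,a'}.\sum_{i,j}r_i s_{ij}\delta_{a',a_i}\xi_{ij}'(C)$, whose coefficients sum to $\sum_i r_i\leq 1$; so the element lies in $\CCM(\pi\times\id)(\kappa_X(f(y)))$. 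The other inclusion is symmetric.

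For the direction ($\Leftarrow$), assume the kernel bisimilarity and fix $x\mathrel{E}y$ with $x\xrightarrow{a}\xi$. Taking the singleton weight $r=1$ shows that the valuation $\lambda\brks{C,a'}.\delta_{a',a}\xi(C)$ lies in $\CCM(\pi\times\id)(\kappa_X(f(x)))$, hence also in $\CCM(\pi\times\id)(\kappa_X(f(y)))$. This gives $(\xi_j,a_j)\in f(y)$ and $r_j\geq 0$ with $\sum_j r_j\leq 1$ such that $\delta_{a',a}\xi(C)=\sum_j r_j\delta_{a',a_j}\xi_j(C)$ for all $C,a'$. Evaluating at $a'\neq a$ and $C=X/E$ forces $r_j=0$ whenever $a_j\neq a$ (since $\xi_j(X/E)=1$), and then evaluating at $a'=a$, $C=X/E$ forces $\sum_{j:a_j=a} r_j=1$. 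Thus $\xi':=\sum_{j:a_j=a} r_j\xi_j$ is a genuine combined step $y\stackrel{a}{\rightsquigarrow}\xi'$, and by construction $\xi(C)=\xi'(C)$ for every equivalence class $C$, as required.

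The main obstacle is the coefficient bookkeeping in the reverse direction: one must exploit simultaneously that the target valuation $\delta_{a',a}\xi(\cdot)$ is concentrated on a single action and that $\xi,\xi_j$ are probability distributions, in order to upgrade a sub‑convex combination of $f(y)$‑transitions (with $\sum r_j\leq 1$ and arbitrary labels) to a genuine convex combination of $a$-labelled transitions; this is precisely where the inclusion of the zero valuation in $\CCM X$ and the convexity built into $\kappa_X$ interact nontrivially.
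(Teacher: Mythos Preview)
Your proposal is correct and follows essentially the same route as the paper: both reduce kernel bisimulation to the condition $\CCM(\pi\times\id)\comp\kappa_X\comp f$ being constant on $E$-classes, handle the direction from strong probabilistic bisimulation by applying the bisimulation clause to each summand and recombining, and handle the converse by specialising to a single transition with weight $1$ and summing over all equivalence classes to force the matching coefficients to sum to $1$. One small notational slip: where you write ``$C=X/E$'' you mean summing the identity over all $C\in X/E$ (so that $\sum_C\xi_j(C)=1$), exactly as the paper does; with that reading your coefficient bookkeeping is the same as the paper's.
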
 
\begin{proof}
Let us fix a simple Segala system $(X,f:X\to\PSet(\Dist X\times A))$. The claim that $E$ is a kernel bisimulation $E$ on $(X,\kappa_X f)$ can be spelled as follows: for any $x,y\in X$ if $x E y$ then $((F\pi)\kappa_X f)(x)=((F\pi)\kappa_X f)(y)$ where $\pi:X\to X/E$ is the canonical projection and $F=\CCM(\argument\times A)$. Due to the presupposed symmetry of $E$, $((F\pi)\kappa_X f)(x)=((F\pi)\kappa_X f)(y)$ is equivalent to the inclusion $((F\pi)\kappa_X f)(x)\subseteq ((F\pi)\kappa_X f)(y)$. Note that we have
\begin{align*}
((F\pi)\kappa_X f)(x)=\,&(F\pi)\left\{\lambda\brks{z,a}.\sum_i r_i\cdot\delta_{a,a_i}\cdot\xi_i(z)\bigm\vert x\xrightarrow{a_i}\xi_i, \sum_i r_i\leq 1\right\}\\
=\,&\left\{\lambda\brks{C,a}.\sum_{z\in C}\sum_i r_i\cdot\delta_{a,a_i}\cdot\xi_i(z)\bigm\vert x\xrightarrow{a_i}\xi_i, \sum_i r_i\leq 1\right\}.
\end{align*}  
Therefore, the inclusion $((F\pi)\kappa_X f)(x)\subseteq((F\pi)\kappa_X f)(y)$ amounts to the following: if $x\xrightarrow{a_i}\xi_i$ and $\sum_i r_i\leq 1$ then there are $\zeta_j$, $b_j$ and $s_j$ such that $y\xrightarrow{b_i}\zeta_j$, $\sum_j s_j\leq 1$ and 
\begin{align}\label{eq:pbis}
\lambda\brks{C,a}.\sum_{z\in C}\sum_i r_i\cdot\delta_{a,a_i}\cdot\xi_i(z)=\lambda\brks{C,b}.\sum_{z\in C}\sum_j s_j\cdot\delta_{b,b_j}\cdot\zeta_j(z).
\end{align}
In particular, if the family $\{r_i\}_i$ is the singleton $\{1\}$ then $\sum_{z\in C}\xi_1(z)=\sum_{z\in C}\sum_{b_j=a_1} s_j\cdot\zeta_j(z)$. Further summation over equivalence classes $C$, yields $1=\sum_{z}\sum_{b_j=a_1} s_j\cdot\zeta_j(z)= \sum_{b_j=a_1} s_j$, which implies in particular that $y\stackrel{a_1}{\rightsquigarrow}\sum_{b_j=a_1} s_j\cdot\zeta_j$. In summary we obtained that $E$ must be a probabilistic strong bisimulation on $(X,f)$. 

In order to complete the proof we need to show that the remaining conditions with $\{r_i\}_i$ not being $\{1\}$ are derivable. By the above reasoning we can assume that for any $i$, there are is a family of nonnegative reals $\{t_{ij}\}_j$  and a family of distributions $\{\zeta_{ij}\}_j$ such that $\sum_{z\in C}\xi_i(z)=\sum_{z\in C}\sum_{j} t_{ij}\cdot\zeta_{ij}(z)$, $\sum_j t_{ij}=1$ and for every $j$, $y\xrightarrow{a_i}\zeta_{ij}$. Therefore
\begin{align*}
\lambda\brks{C,a}.\sum_{z\in C}\sum_i r_i\cdot\delta_{a,a_i}\cdot\xi_i(z)=\lambda\brks{C,b}.\sum_{z\in C}\sum_{i,j} r_i\cdot t_{ij}\cdot\delta_{b,a_i}\cdot\zeta_{ij}(z).
\end{align*}
Since $\sum_{i,j} r_i\cdot t_{ij}=\sum_{i} r_i\leq 1$, we have thus indeed obtained an instance of equation~\eqref{eq:pbis}.
\qed\end{proof}
We now turn our attention to probabilistic weak bisimulation for simple Segala systems, again going back to~\cite{Segala:1994:PSP}.
\begin{defn}[Weak probabilistic bisimulation~\cite{Segala:1994:PSP}]
An equivalence relation $E\subseteq X\times X$ on a simple Segala system $(X,f:X\to\PSet(\Dist X\times A))$ is a \emph{weak probabilistic bisimulation} iff for any $x,y\in X$ such that $x E y$ and $x\xrightarrow{a}\xi$ there is $n$ such that $y\xRightarrow{a}_n\xi'$ and for any $E$-equivalence class $C$, $\sum_{z\in C}\xi(z)=\sum_{z\in C}\xi'(z)$. The family of relations $\xRightarrow{a}_n$ is defined by induction as follows:
\begin{itemize}
 \item $x\xRightarrow{a}_0\xi$ iff $a=\tau$ and $\xi=\delta_x$;
 \item $x\xRightarrow{a}_{n+1}\xi$ iff $\xi$ is representable as 
$
\sum_{y\in X}p\cdot\zeta_a(y)\cdot\theta^\tau_y + (1-p)\cdot\zeta_{\tau}(y)\cdot\theta^{a}_y
$
where $p\in[0,1]$, $x\stackrel{a}{\rightsquigarrow}\zeta_a$, $x\stackrel{\tau}{\rightsquigarrow}\zeta_{\tau}$ and for any $y\in Y$, $y\xRightarrow{a}_n\theta_y^a$, $y\xRightarrow{\tau}_n\theta_y^\tau$ (note that if $a=\tau$ this requires $\xi$ to be equal $\sum_{y\in X}\zeta_\tau(y)\cdot\theta^\tau_y$). 
\end{itemize}
\end{defn}
For any $a\in A$, let us denote by $\xRightarrow{a}$ the union of all relations $\xRightarrow{a}_n$. Note that the resulting relation determines a new simple Segala system $(X,g:X\to\PSet(\Dist X\times A))$ for which $g(x)(\xi,a)$ iff $x\xRightarrow{a}\xi$.
\begin{lem}\label{lem:seqw}
Let $(X,f:X\to\PSet(\Dist X\times A))$ be a simple Segala system inducing a family of relations $\xRightarrow{a}$. 
\begin{itemize}
 \item Suppose, $x\xRightarrow{a}\xi$ and for any $y\in X$, $y\xRightarrow{\tau}\xi_y$. Then $x\xRightarrow{a}\sum_{y\in X}\xi(y)\cdot\xi_y$. 
 \item Suppose, $x\xRightarrow{\tau}\xi$ and for any $y\in X$, $y\xRightarrow{a}\xi_y$. Then $x\xRightarrow{a}\sum_{y\in X}\xi(y)\cdot\xi_y$. 
\end{itemize}
\end{lem}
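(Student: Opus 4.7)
The plan is to prove both items simultaneously by induction on the depth $n$ of the witnesses $x \xRightarrow{a}_n \xi$ (first claim) and $x \xRightarrow{\tau}_n \xi$ (second claim). In the base case $n=0$ either claim reduces to $\xi = \delta_x$, so $\sum_y \xi(y) \cdot \xi_y = \xi_x$, and the given hypothesis -- $x \xRightarrow{\tau} \xi_x$ for the first claim, $x \xRightarrow{a} \xi_x$ for the second -- is exactly the desired conclusion.

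For the inductive step of the first claim, I would unfold $x \xRightarrow{a}_{n+1} \xi$ as
\[
\xi \;=\; \sum_{y} p \cdot \zeta_a(y) \cdot \theta^\tau_y + (1-p) \cdot \zeta_\tau(y) \cdot \theta^a_y
\]
with combined steps $x \stackrel{a}{\rightsquigarrow} \zeta_a$ and $x \stackrel{\tau}{\rightsquigarrow} \zeta_\tau$, together with subtransitions $y \xRightarrow{\tau}_n \theta^\tau_y$ and $y \xRightarrow{a}_n \theta^a_y$ for each $y$ in the (finite) support of $\zeta_a$ or $\zeta_\tau$. Applying the first-claim induction hypothesis to each $\theta^\tau_y$ (label $\tau$) and each $\theta^a_y$ (label $a$) against the family $\xi_z$ supplied by the lemma's premise yields $y \xRightarrow{\tau} \tilde\theta^\tau_y$ and $y \xRightarrow{a} \tilde\theta^a_y$ with $\tilde\theta^b_y := \sum_z \theta^b_y(z) \cdot \xi_z$ for $b \in \{\tau, a\}$. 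Keeping the same $p$, $\zeta_a$, $\zeta_\tau$ and substituting $\tilde\theta^\tau_y, \tilde\theta^a_y$ in place of $\theta^\tau_y, \theta^a_y$ then witnesses $x \xRightarrow{a} \eta$ where a Fubini-style rearrangement shows
\[
\eta \;=\; \sum_y p \zeta_a(y) \tilde\theta^\tau_y + (1-p) \zeta_\tau(y) \tilde\theta^a_y \;=\; \sum_z \xi(z) \cdot \xi_z,
\]
which is the desired target. The second claim follows the same recipe: $x \xRightarrow{\tau}_{n+1} \xi$ collapses (by the parenthetical proviso with $a = \tau$) to $\xi = \sum_y \zeta_\tau(y) \cdot \theta^\tau_y$, and applying the second-claim induction hypothesis to each $\theta^\tau_y$ against $\xi_y$ produces weak transitions that reassemble into the required one via an analogous Fubini exchange.

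The main obstacle I anticipate is depth alignment. The induction hypothesis supplies each $y \xRightarrow{b} \tilde\theta^b_y$ at some depth $m_y$ that may vary with $y$, whereas the defining clause of $\xRightarrow{a}_{k+1}$ requires a common depth for all $y$ in the decomposition. Since $\zeta_a$ and $\zeta_\tau$ have finite support, only finitely many indices contribute nontrivially, so I would set $N := \max_y m_y$ and assemble the witness at level $N+1$. Lifting each component transition from depth $m_y$ up to $N$ relies on an auxiliary monotonicity sublemma $\xRightarrow{b}_m \subseteq \xRightarrow{b}_{m+1}$, which I would prove beforehand by padding a derivation with the reflexive combined step $x \stackrel{b}{\rightsquigarrow} \delta_x$ (treating such trivial combined steps as implicitly available, consistent with the recursive equation for $\To$ in the main text).
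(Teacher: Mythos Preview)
Your proposal is correct and follows essentially the same route as the paper: induction on the depth $n$ of $x\xRightarrow{a}_n\xi$, unfolding the recursive clause, applying the induction hypothesis to each inner transition $\theta^b_y$, and finishing with a Fubini-style interchange of the two summations. The paper treats only the first clause and declares the second analogous; you handle both simultaneously, but the mechanics are the same. Your explicit attention to depth alignment is a point the paper glosses over (it simply writes $x\xRightarrow{a}$ after applying the hypothesis and tacitly reassembles), and the monotonicity sublemma you propose is indeed used implicitly in the paper as well (cf.\ the proof of the subsequent lemma, where depths are aligned by taking $n=\max_i n_i$).
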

\begin{proof}
We prove only the first clause, as the second one is completely analogous. Suppose, $x\xRightarrow{a}_n\xi$ and proceed by induction over $n$. 

Let $n=0$. Then, by definition, $a=\tau$ and $\xi=\delta_x$. Therefore, $x\xRightarrow{a}\sum_{y\in X}\delta_x(y)\cdot\xi_y=\xi_x$.

Let $n>0$. Then $\xi=\sum_{y\in X}p\cdot\zeta_a(y)\cdot\theta^\tau_y + (1-p)\cdot\zeta_{\tau}(y)\cdot\theta^{a}_y$ where $p\in[0,1]$, $x\stackrel{a}{\rightsquigarrow}\zeta_a$, $x\stackrel{\tau}{\rightsquigarrow}\zeta_{\tau}$ and for any $y\in Y$, $y\xRightarrow{a}_{n-1}\theta_y^a$, $y\xRightarrow{\tau}_{n-1}\theta_y^\tau$. By the induction hypothesis, for all $y\in X$, $y\xRightarrow{a}\sum_{z\in X}\theta_y^a(z)\cdot\xi_z$, $y\xRightarrow{\tau}\sum_{z\in X}\theta_y^\tau(z)\cdot\xi_z$. Therefore
\begin{align*}
x\xRightarrow{a}_n~&\sum_{y\in X}\left(p\cdot\zeta_a(y)\cdot\sum_{z\in X}\theta_y^\tau(z)\cdot\xi_z+
(1-p)\cdot\zeta_\tau(y)\cdot\sum_{z\in X}\theta_y^a(z)\cdot\xi_z\right)\\
=~~~&\sum_{z\in X}\left(\sum_{y\in X} p\cdot\zeta_a(y)\cdot\theta_y^\tau(z)+
(1-p)\cdot\zeta_\tau(y)\cdot\theta_y^a(z)\right)\cdot\xi_z
\end{align*}
and we are done.
\qed\end{proof}
\begin{lem}\label{lem:pseg}
An equivalence relation $E\subseteq X\times X$ on a simple Segala system $(X,f:X\to\PSet(\Dist X\times A))$ is a \emph{weak probabilistic bisimulation} iff the induced family of relations $\xRightarrow{a}$ determines the simple Segala system on which $E$ is a strong probabilistic bisimulation. 
\end{lem}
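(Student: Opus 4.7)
The proof will establish the two directions separately, each by induction on the depth $n$ in $x \xRightarrow{a}_n \xi$, and in both cases the hard work lies in bridging between ordinary transitions (governed by the weak bisimulation clause) and weak transitions of arbitrary depth (governed by the strong bisimulation clause on $(X, g)$).

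For the direction ($\Leftarrow$), assume $E$ is a strong probabilistic bisimulation on $(X, g)$. Given $x E y$ and $x\xrightarrow{a}\xi$, first observe that an ordinary transition is a degenerate weak one: taking $\zeta_a = \xi$, $p=1$, and filling the $\theta$-family arbitrarily yields $x \xRightarrow{a}_1 \xi$ (with the analogous choice for $a=\tau$). The strong bisimulation on $(X,g)$ then yields a combined step $y \stackrel{a}{\rightsquigarrow'} \xi' = \sum_i r_i\xi'_i$ with $y\xRightarrow{a}\xi'_i$ and matching projections $\sum_{z\in C}\xi(z)=\sum_{z\in C}\xi'(z)$. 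To repackage this as the single weak transition $y\xRightarrow{a}_n \xi'$ required by Definition, I prove an auxiliary lemma asserting that the relation $\xRightarrow{a}$ is closed under convex combinations: if $y\xRightarrow{a}_{n_i}\xi'_i$ for each $i$ and $\sum r_i=1$, then $y\xRightarrow{a}_m \sum_i r_i\xi'_i$ for some uniform $m$. This is shown by induction on $\max_i n_i$, using that combined steps $\stackrel{a}{\rightsquigarrow}$ and $\stackrel{\tau}{\rightsquigarrow}$ are themselves closed under convex combinations and that the defining decomposition of $\xRightarrow{a}_{n+1}\xi$ is linear in the $\zeta$'s and $\theta$'s.

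For the direction ($\Rightarrow$), assume $E$ is a weak probabilistic bisimulation on $(X, f)$. Given $x E y$ and $x\xRightarrow{a}_n \xi$, induct on $n$. The base case $n=0$ forces $a=\tau$ and $\xi=\delta_x$; matching with $y \xRightarrow{\tau}_0 \delta_y$ gives the required combined step, and the projection condition follows directly from $x E y$. For $n+1$, decompose $\xi=\sum_z \bigl(p\,\zeta_a(z)\,\theta^\tau_z + (1-p)\,\zeta_\tau(z)\,\theta^a_z\bigr)$ with $x\stackrel{a}{\rightsquigarrow}\zeta_a$, $x\stackrel{\tau}{\rightsquigarrow}\zeta_\tau$ and $z\xRightarrow{b}_n\theta^b_z$. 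Split $\zeta_a$ into its constituent ordinary transitions $x\xrightarrow{a}\xi_k$ and apply the weak bisimulation clause to match each by some $y\xRightarrow{a}_{m_k}\xi'_k$; convexly recombine using the same weights, then invoke the auxiliary lemma from the ($\Leftarrow$) direction to bundle these into a single weak step $y\xRightarrow{a}\zeta'_a$ with class-wise projection equal to that of $\zeta_a$ (and similarly for $\zeta_\tau$). For the inner transitions, the induction hypothesis supplies, for each $z'$ with $z E z'$, a matching $z'\xRightarrow{b}\theta'^b_{z'}$. Lemma~\ref{lem:seqw} glues these prefix and suffix weak transitions into a single weak transition $y\xRightarrow{a}\xi'$ whose projection equals $\sum_{z\in C}\xi(z)$ on each class~$C$. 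This single weak transition $y\xRightarrow{a}\xi'$ is the desired (singleton) combined step $y\stackrel{a}{\rightsquigarrow'}\xi'$ in $(X,g)$.

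The main obstacle is bookkeeping of the depth indices: the convex-combination lemma needed in the ($\Leftarrow$) direction, and the depth-uniformisation of the mismatched $m_k$'s in the ($\Rightarrow$) induction. Both rest on showing that the hierarchy $\{\xRightarrow{a}_n\}_n$ is closed under convex combinations at some uniform level, which propagates through the inductive decomposition by linearity together with convexity of the combined-step relation. Once this is in place, both implications follow by straightforward unravelling, with the projection equalities transported along the arithmetic of weighted sums.
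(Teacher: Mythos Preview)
Your proof is correct and follows essentially the same route as the paper's. Both directions proceed by induction on the depth index $n$, with the ($\Leftarrow$) direction using closure of $\xRightarrow{a}$ under convex combinations (your auxiliary lemma; the paper's ``easy to check by induction over $n$''), and the ($\Rightarrow$) direction decomposing the weak step, matching the combined steps $\zeta_a,\zeta_\tau$ via the weak bisimulation clause, applying the induction hypothesis to the inner $\theta$'s, and gluing with Lemma~\ref{lem:seqw}. Your write-up is in fact more explicit than the paper's at two points: the paper dismisses the convex-closure lemma as obvious, and it declares the construction of $\zeta'_a,\zeta'_\tau$ ``straightforward'' without spelling out (as you do) that one splits the combined step into ordinary transitions, matches each via the weak bisimulation clause, and reassembles via the convex-closure lemma.
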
 
\begin{proof}
Let $(X,g:X\to\PSet(\Dist X\times A))$ be the simple Segala system corresponding to the relation $\xRightarrow{a}$. Suppose, $E$ is a strong probabilistic bisimulation on $(X,g)$ and let us show that $E$ is a weak probabilistic bisimulation on $(X,f)$. Suppose that $x E y$ and $x\xrightarrow{a}\xi$. Clearly, $x\xRightarrow{a}_1\xi$ and therefore $x\xRightarrow{a}\xi$. By definition, there is a family $\{\xi_i\}_i$ of distributions and a family of nonnegative reals $\{r_i\}_i$ such that $\sum_i r_i=1$ and for all $i$, $y\xRightarrow{a}\xi_i$ and any $E$-equivalence class~$C$, $\sum_{z\in C}\xi(z)=\sum_{z\in C}\sum_i r_i\cdot\xi_i(z)$. By definition, for any $i$ there is $n_i$ such that $y\xRightarrow{a}_{n_i}\xi_i$ and therefore for all $i$, $y\xRightarrow{a}_n\xi_i$ where $n=\max_i n_i$. It is then easy to check by induction over $n$ that $y\xRightarrow{a}_n\sum_i r_i\cdot\xi_i$. In summary we obtain that $\sum_{z\in C}\xi(z)=\sum_{z\in C}\xi'_i(z)$ for any $E$-equivalence class~$E$ with $\xi'=\sum_i r_i\cdot\xi_i$.

Now suppose that $E$ is a weak probabilistic bisimulation on $(X,f)$ and show that $E$ is a strong probabilistic bisimulation on $(X,g)$. Suppose that $x E y$ and $x\xRightarrow{a}_n\xi$. It is then sufficient to construct such $\xi'$ that $y\xRightarrow{a}\xi'$ and for any $E$-equivalence class~$C$, $\sum_{z\in C}\xi(z)=\sum_{z\in C}\xi'(z)$, which we do by induction over~$n$.

If $n=0$ then $a=\tau$ and $\xi=\delta_x$. Then we are done with $\xi'=\delta_y$.

If $n>0$ then $\xi$ must have form
$
\sum_{z\in X}p\cdot\zeta_a(z)\cdot\theta^\tau_z + (1-p)\cdot\zeta_{\tau}(z)\cdot\theta^{a}_z
$
where $p\in[0,1]$, $x\stackrel{a}{\rightsquigarrow}\zeta_a$, $x\stackrel{\tau}{\rightsquigarrow}\zeta_{\tau}$, for all $z\in X$, $z\xRightarrow{a}_{n-1}\theta_z^a$, $z\xRightarrow{\tau}_{n-1}\theta_z^\tau$. Using the fact that $E$ is a weak probabilistic bisimulation it is straightforward to construct such $\zeta_a'$ and $\zeta_{\tau}'$ that $y\xRightarrow{a}\zeta_a'$, $y\xRightarrow{\tau}\zeta_\tau'$ and for any $E$-equivalence class~$C$, $\sum_{z\in C}\zeta_a(z)=\sum_{z\in C}\zeta_a'(z)$ and $\sum_{z\in C}\zeta_{\tau}(z)=\sum_{z\in C}\zeta_{\tau}'(z)$. By induction, for any $z\in X$ and any $b\in\{a,\tau\}$ there exists $\vartheta_z^b$ such that $z\xRightarrow{b}\vartheta_{z}^b$ and for any $E$-equivalence class~$C$, $\sum_{v\in C}\vartheta_{z}^b(v)=\sum_{v\in C}\theta_{z'}^b(v)$ whenever $z E z'$. We then define $\xi'=\sum_{z\in X} p\cdot\zeta'_a(z)\cdot\vartheta_z^\tau+(1-p)\cdot\zeta'_\tau(z)\cdot\vartheta_z^a$.

By Lemma~\ref{lem:seqw}, $y\xRightarrow{a}\sum_{z\in X}\zeta'_a(z)\cdot\vartheta_z^\tau$ and $y\xRightarrow{a}\sum_{z\in X}\zeta'_\tau(z)\cdot\vartheta_z^a$. Therefore, it is easy to see that $y\xRightarrow{a}\xi'$. For any $C\in X/E$ we have
\begin{align*}
\sum_{v\in C}\xi(v) =~&\sum_{v\in C}\sum_{z\in X}(p\cdot\zeta_a(z)\cdot\theta^\tau_z(v) + (1-p)\cdot\zeta_{\tau}(z)\cdot\theta^{a}_z(v))\\
=~&p\cdot\sum_{z\in X}\zeta_a(z)\cdot\sum_{v\in C}\theta^\tau_z(v)~+\\
 &(1-p)\cdot\sum_{z\in X}\zeta_{\tau}(z)\cdot\sum_{v\in C}\theta^{a}_z(v)\\
=~&p\cdot\sum_{D\in X/E}\sum_{z\in D}\zeta_a(z)\cdot\sum_{v\in C}\theta^\tau_z(v)~+\\
 &(1-p)\cdot\sum_{D\in X/E}\sum_{z\in D}\zeta_{\tau}(z)\cdot\sum_{v\in C}\theta^{a}_z(v)\\
=~&p\cdot\sum_{D\in X/E}\sum_{z\in D}\zeta_a'(z)\cdot\sum_{v\in C}\vartheta^\tau_z(v)~+\\
 &(1-p)\cdot\sum_{D\in X/E}\sum_{z\in D}\zeta_{\tau}'(z)\cdot\sum_{v\in C}\vartheta^{a}_z(v)\\
=~&\sum_{v\in C}\xi'(v).
\end{align*}
Therefore the induction is finished.
\qed\end{proof}
Finally, we show that our notion of weak bisimulation given by~\eqref{eq:rec} agrees with the notion of weak probabilistic bisimulation for simple Segala systems.
\begin{thm}
An equivalence relation $E$ is a probabilistic bisimulation over a simple Segala system $(X,f:X\to\PSet(\Dist X\times A))$ iff $E$ is a $B$-bisimulation over $(X,\kappa_X f)$ w.r.t.\ to the weak observation pattern. 
\end{thm}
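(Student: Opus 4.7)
\subsection*{Proof plan for the final theorem}

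The plan is to chain together the lemmas already proved for simple Segala systems with the general reduction of $B$-bisimulation to kernel bisimulation, exploiting the fact that the join $\join$ is algebraic on the monad $\bbC$. Concretely, weak probabilistic bisimulation on $(X,f)$ is reformulated in two steps as a kernel bisimulation on a transformed $\bbC$-coalgebra, and this coalgebra will be shown to coincide, up to a harmless reindexing by the weak observation pattern $B$, with $(\kappa_Xf)^B_{\id}$, whose kernel bisimulations in turn coincide with the $B$-bisimulations on $(X,\kappa_Xf)$ by Theorem~\ref{thm:red}.

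First, by Lemma~\ref{lem:pseg}, an equivalence $E$ on $(X,f)$ is a weak probabilistic bisimulation iff it is a strong probabilistic bisimulation on the simple Segala system $(X,g:X\to\PSet(\Dist X\times A))$ whose transitions are precisely the weak transitions $\xRightarrow{a}$. By Lemma~\ref{lem:sseg}, the latter is in turn equivalent to $E$ being a kernel bisimulation on $(X,\kappa_Xg)$. On the other side, Theorem~\ref{thm:red} applies since the join $\oplus=\join$ is algebraic on $\bbC$, hence $E$ is a $B$-bisimulation on $(X,\kappa_Xf)$ (for the weak pattern $B$) iff $E$ is a kernel bisimulation on $(X,(\kappa_Xf)^B_{\id})$. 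Thus the theorem reduces to showing that kernel bisimulations on $(X,\kappa_Xg)$ and on $(X,(\kappa_Xf)^B_{\id})$ agree.

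To identify these two coalgebras I would follow the computation already sketched in Section~\ref{eq:expl}. Unfolding Equation~\eqref{eq:rec} with $\oplus=\join$ on $\bbC$ gives exactly the recursive system displayed there for the weak relation $\mathop{\To}\subseteq X\times B\times[0,\infty)^X$, whose iterative approximants $\To_i$ reproduce the inductive definition of the relations $\xRightarrow{a}_n$ from~\cite{Segala:1994:PSP}. By Kleene's fixpoint theorem (Lemma~\ref{lem:lfp}), the least solution satisfies $\mathop{\To}=\bigcup_i\mathop{\To_i}$, so $\zeta\in(\kappa_Xf)^B_{\id}(x)(\hat a)$ iff $x\xRightarrow{a}\zeta$ iff $\brks{\zeta,a}\in g(x)$ (after absorbing $A$ into $B$ via $a\mapsto\hat a$). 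In other words, $(\kappa_Xf)^B_{\id}$ is, up to the canonical reindexing of $B$ by $A$, precisely $\kappa_Xg$, so the two coalgebras induce the same kernel bisimulations. Chaining the equivalences proves the theorem.

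The main technical obstacle is the last identification, which requires a careful induction matching the Kleene approximants of~\eqref{eq:rec} with Segala's inductively defined weak transitions $\xRightarrow{a}_n$; this means handling the convex-combination closure imposed by $\bbC$ and verifying that ``combined steps'' correspond exactly to picking a representative valuation from the convex hull produced by one unfolding of~\eqref{eq:rec}. Lemma~\ref{lem:seqw} supplies the key closure properties that make this correspondence go through, and convexity of $\CCM$ (as exploited in the proof of Lemma~\ref{lem:sseg}) ensures that kernel bisimulations on the two codomain encodings coincide.
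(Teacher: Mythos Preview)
Your overall strategy mirrors the paper's: reduce weak probabilistic bisimulation to a kernel bisimulation via Lemmas~\ref{lem:pseg} and~\ref{lem:sseg}, reduce $B$-bisimulation to a kernel bisimulation via Theorem~\ref{thm:red} (which applies since $\join$ is algebraic on $\bbC$), and then compare the two resulting conditions. That is exactly what the paper does, though it states the second reduction directly rather than citing Theorem~\ref{thm:red}.

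There is, however, a genuine gap in your identification step. Your claim that ``$\zeta\in(\kappa_Xf)^B_{\id}(x)(\hat a)$ iff $x\xRightarrow{a}\zeta$'' and hence that $(\kappa_Xf)^B_{\id}$ is ``precisely $\kappa_Xg$'' up to reindexing is not correct. The Kleene approximants of~\eqref{eq:rec} do \emph{not} reproduce Segala's relations $\xRightarrow{a}_n$. Unfolding the recursion over $\bbC$ (as the paper does) one obtains
\[
x\xRightarrow{\hat a}_{n+1}\zeta \text{~~iff~~} \zeta=\sum_{y}r\cdot\xi_a(y)\cdot\theta^\tau_{y}+s\cdot\xi_\tau(y)\cdot\theta^a_y \quad\text{with } r+s\leq 1,
\]
whereas Segala's clause has $p+(1-p)=1$. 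The discrepancy comes from the very feature of $\CCM$ you mention: elements of $\CCM X$ are closed under sub-convex combinations and always contain the zero valuation, so $(\kappa_Xf)^B_{\id}(x)(\hat a)$ contains all \emph{sub-normalised} versions of Segala's weak targets, not just the probability distributions. The paper records the correct relationship: $x\xRightarrow{\hat a}\xi$ iff either $\xi=0$ or $x\xRightarrow{a}\bigl(1/\sum_z\xi(z)\bigr)\cdot\xi$. Note also that the codomains $(\CCM X)^B$ and $\CCM(X\times A)$ are not isomorphic for $\CCM$, so the two coalgebras cannot literally coincide ``up to reindexing''.

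This does not break your plan, but it means you need one more argument that you do not supply: that the normalisation mismatch is invisible at the level of the bisimulation condition. Concretely, if $xEy$ and $x\xRightarrow{\hat a}\xi$ with $\xi\neq 0$, rescale to get $x\xRightarrow{a}\xi/|\xi|$, use weak probabilistic bisimulation to obtain a matching $y\xRightarrow{a}\xi'$, and then scale back down by $|\xi|\leq 1$ to get $y\xRightarrow{\hat a}|\xi|\cdot\xi'$; the converse is similar. This is the step the paper sweeps into ``which implies that indeed\ldots'', and it is what your proposal is missing.
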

\begin{proof}
Let the family of relations $\xRightarrow{a}$ be induced by $(X,f)$. By Lemma~\ref{lem:pseg}, $E\subseteq X\times X$ is a weak probabilistic bisimulation over $(X,f)$ iff $E$ is a strong probabilistic bisimulation over $(X,g:X\to\PSet(\Dist X\times A))$ where $\brks{\xi,a}\in g(x)$ iff $x\xRightarrow{a}\xi$. By Lemma~\ref{lem:sseg} the latter means that $E$ is a kernel bisimulation on $(X,\kappa_Xg)$, i.e.\ for all $x,y\in X$, if $x E y$ and $\xi\in(\kappa_X g)(x)$ then there is $\xi'$ such that $\xi'\in(\kappa_X g)(x)$ and for all $C\in X/E$ and $a\in A$, $\sum_{z\in C}\xi(z,a)=\sum_{z\in C}\xi'(z,a)$. By definition, $\xi\in(\kappa_X g)(x)$ iff $\xi$ is of the form $\lambda\brks{x,a}.\sum_i r_i\cdot\delta_{a,a_i}\cdot\xi_i(x)$ where $\sum_i r_i\leq 1$ and for all $i$, $x\xRightarrow{a_i}\xi_i$. After simple calculations we conclude that $E$ is a probabilistic bisimulation over $(X,f)$ iff whenever $x E y$ and $x\xRightarrow{a}\xi$ then there is $\xi'$ such that $y\xRightarrow{a}\xi'$ and $\sum_{z\in C}\xi(z)=\sum_{z\in C}\xi'(z)$ for any $E$-equivalence class~$C$.

On the other hand, $E$ is a $B$-bisimulation iff whenever $x E y$ and $x\xRightarrow{\hat a}
\xi$, there exists $\xi'$ such that $y\xRightarrow{\hat a} \xi'$ and $\sum_{z\in C}\xi(z)=\sum_{z\in C}\xi'(z)$ for any $E$-equivalence class~$C$ where $\xRightarrow{\hat a}\,=\bigcup_i\xRightarrow{\hat a}_i$ and $\xRightarrow{\hat a}_i$ is given recurrently as follows:
\begin{align*}
x\xRightarrow{\hat a}_0\zeta&\text{~~iff~~} a=\tau \text{~~and~~} \zeta=\delta_{x}\\
x\xRightarrow{\hat a}_{n+1}\zeta&\text{~~iff~~} \exists\xi\in(\kappa_Xf)(x).~\zeta\in\left\{\sum\limits_{y\in X}\xi(y,a)\cdot\theta^\tau_{y}+\xi(y,\tau)\cdot\theta^a_y\mathop{\Bigm\vert }\forall y.\, y\xRightarrow{\hat b}_{n}\theta^b_{y}\right\}  
\end{align*}
Recall that $\xi\in(\kappa_Xf)(x)$ iff $\xi=\lambda\brks{x,a}.\sum_i r_i\cdot\delta_{a,a_i}\cdot\xi_i(x)$ where $\sum_i r_i\leq 1$ and for all $i$, $\brks{\xi_i,a_i}\in f(x)$. Therefore we can rewrite the former definition as follows:
\begin{align*}
x\xRightarrow{\hat a}_0\zeta&\text{~~iff~~} a=\tau \text{~~and~~} \zeta=\delta_{x}\\[1ex]
x\xRightarrow{\hat a}_{n+1}\zeta&\text{~~iff~~} \zeta=\sum_{y\in X}r\cdot\xi_a(y)\cdot\theta^\tau_{y}+s\cdot\xi_\tau(y)\cdot\theta^a_y\\
&\text{~~where~~}\forall y\in X.~\forall b\in \{a,\tau\}.~y\xRightarrow{\hat b}_{n}\theta^b_{y},~~x\stackrel{a}{\rightsquigarrow}\xi_a,~~x\stackrel{\tau}{\rightsquigarrow}\xi_\tau,~~r+s\leq 1 
\end{align*}
where $\stackrel{b}{\rightsquigarrow}\,\in X\times [0,\infty)^X$ is the combined step relation associated with $(X,f)$. It is easy to check by induction that $x\xRightarrow{\hat a}\xi$ iff either $\xi$ is identically $0$ or $x\xRightarrow{a}(1/\sum_z\xi(z))\cdot\xi$, which implies that indeed $E$ is a weak probabilistic bisimulation iff $E$ is a $B$-bisimulation.
\qed\end{proof}

\subsection*{Proof of Lemma~\ref{lem:alg}}
We will use the following fact, saying that any $n$-ary algebraic operation is exactly specified by an element of $Tn$, called \emph{generic effect}.
\begin{lem}[\cite{PlotkinPower02}]\label{lem:gen}
For any $n$-ary algebraic operation $\alpha$ of\/ $\BBT$ there is a generic effect $e\in Tn$ such that $\alpha_X(m)=m^{\klstar}(e)$. This defines a bijective correspondence between $n$-ary algebraic operations and elements of $Tn$.
\end{lem}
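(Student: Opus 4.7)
The plan is to write down explicit maps in both directions between $n$-ary algebraic operations on $\BBT$ and elements of $Tn$ (where $n = \{1,\ldots,n\}$), and then verify they are mutually inverse using only the three Kleisli triple axioms. Throughout, I tacitly identify $(TX)^n$ with $\Set(n,TX)$, so that an $n$-tuple $m = (m_1,\ldots,m_n)$ corresponds to the function $m : n \to TX$ with $m(i) = m_i$.

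First I would set up the two assignments. Given an algebraic operation $\alpha : T^n \to T$, define the generic effect $e_\alpha = \alpha_n(\eta_n) \in Tn$, viewing $\eta_n : n \to Tn$ as an element of $(Tn)^n$. Conversely, given $e \in Tn$, define a family of maps $\alpha^e_X : (TX)^n \to TX$ by $\alpha^e_X(m) = m^{\klstar}(e)$. Then I would check that $\alpha^e$ is natural and algebraic: for $f : X \to Y$, the computation $Tf \circ m^{\klstar} = ((\eta_Y f)^{\klstar} m)^{\klstar} = (Tf \circ m)^{\klstar}$ (using $(g^{\klstar} h)^{\klstar} = g^{\klstar} h^{\klstar}$) yields naturality, and the same axiom applied to $f : X \to TY$ gives $f^{\klstar} \circ \alpha^e_X(m) = (f^{\klstar} m)^{\klstar}(e) = \alpha^e_Y((f^{\klstar})^n m)$, which is algebraicity.

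Next I would verify the two round-trips. For $e \mapsto \alpha^e \mapsto e_{\alpha^e}$, the unit axiom $\eta^{\klstar} = \id$ gives $e_{\alpha^e} = \alpha^e_n(\eta_n) = \eta_n^{\klstar}(e) = e$. For $\alpha \mapsto e_\alpha \mapsto \alpha^{e_\alpha}$, I would apply the algebraic property of $\alpha$ to an arbitrary $m : n \to TX$, which yields $m^{\klstar} \circ \alpha_n = \alpha_X \circ (m^{\klstar})^n$. Evaluating both sides at $\eta_n \in (Tn)^n$ and using $m^{\klstar} \circ \eta_n = m$ (the other Kleisli axiom), I obtain $\alpha^{e_\alpha}_X(m) = m^{\klstar}(\alpha_n(\eta_n)) = \alpha_X((m^{\klstar} \circ \eta_n)_i) = \alpha_X(m)$, as required.

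The main obstacle is purely notational: keeping the identification $(TX)^n \cong \Set(n,TX)$ straight so that one sees $\eta_n$ simultaneously as a Kleisli identity and as an $n$-tuple of generators, and similarly so that $(m^{\klstar})^n$ acting componentwise on tuples matches post-composition with $m^{\klstar}$ on functions out of $n$. Once this dictionary is fixed, every verification reduces to a one-line application of a Kleisli triple law, and no additional hypothesis on $\BBT$ (such as order-enrichment) is needed.
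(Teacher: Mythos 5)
Your proposal is correct and follows essentially the same route as the paper, which defines the generic effect as $\alpha_n(\eta_n)$ and leaves the verification as ``easy to check''; you have simply carried out that verification in full, and each step (naturality and algebraicity of $\alpha^e$, and the two round-trips via $\eta^{\klstar}=\id$, $m^{\klstar}\eta=m$, and $(f^{\klstar}g)^{\klstar}=f^{\klstar}g^{\klstar}$) is sound. Your closing observation that no order-enrichment is needed is also accurate.
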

\begin{proof}
Given $\alpha:T^n\to T$ we obtain the corresponding generic effect by applying $\alpha_{n}$ to $\eta_n:n\to Tn$. It is easy to verify that this yields the bijection in question. 
\qed\end{proof}
The fact that each $F_i$ is continuous means that for any directed set $D\subseteq\Hom(X,TY)$ and for any $i\in n$, $F_i\left(\bigsqcup_{f\in D}\right)=\bigsqcup_{f\in D} F_i(f)$. By Lemma~\ref{lem:gen}, there is $e\in Tn$ such that $\alpha_{Y'}(m)=m^{\klstar}(e)$. Therefore
\begin{align*}
\alpha_{Y'}\left(\lambda i.\, F_i\left(\bigsqcup_i f_i\right)(x)\right) 
=&\, \left(\lambda i.\, \bigsqcup_{f\in D} F_i(f_i)(x)\right)^{\klstar}(e)\\
=&\, \left(\bigsqcup_{f\in D} \lambda i.\, F_i(f_i)(x)\right)^{\klstar}(e)\\
=&\, \bigsqcup_{f\in D} (\lambda i.\,F_i(f_i)(x))^{\klstar}(e)\\
=&\, \bigsqcup_{f\in D} \alpha_{Y'}(\lambda i.\,F_i(f_i)(x)).
\end{align*}
%
\qed

\subsection*{Proof of Lemma~\ref{lem:rec_tail}}
We will make use of the property of least fixpoints of continuous
functions called \emph{uniformity}~\cite{Gunter92}
and which can be verified by
fixpoint induction: if $F$, $G$ and $U$ are continuous and satisfy
equations $UF=GU$, $U(\bot)=\bot$ then $U\mu F=\mu G$. 

Let $U(g)=T^Bu\comp g$, let $F$ be the map \eqref{eq:func} and let
$G$ be the analogous map defining $f_{\sigma,u\comp h}$. Obviously
$U(\bot)$. Let us verify the equality $UF=GU$, where we write
$\rho_{\epsilon \in b} = \eta_X $ if $\epsilon \in b$, and
$\rho_{\epsilon \in b} = \bot$, otherwise.  
\begin{align*}
(UF(g))(x)(b) =&~T^Bu\comp(\rho_{\eps\in b}(h(x))\oplus g^\klstar(T(\id\times\lambda a.\, b/a) f(x)))\\
=&~Tu\comp\rho_{\eps\in       b}(h(x))\oplus T^Bu\comp g^\klstar(T(\id\times\lambda a.\, b/a) f(x))\\
=&~\rho_{\eps\in b}(u(h(x)))\oplus (T^Bu\comp g)^\klstar(T(\id\times\lambda a.\, b/a) f(x))\\
=&~(GU(g))(x)(b).
\end{align*}
Therefore, $f^B_{u\comp h}=\mu G=U\mu F=T^Bu\comp f^B_{h}$ and we are done.
\qed

\subsection*{Proof of Lemma~\ref{lem:dia}}
We proceed analogously to
the proof of Lemma~\ref{lem:rec_tail}. Specifically, let
$U(w)=w\comp h$ and let $F$ and $G$ be such that $\mu
G=f^B_{u\comp h}$ and $\mu F=g^B_{u}$. Obviously by
definition $U(\bot)=\bot$. Furthermore,
\begin{align*}
(UF(w))(x)(b) =&~\rho_{\eps\in b}(u(h(x)))\oplus w^\klstar(T(\id\times\lambda a.\, b/a) g(h(x)))\\
 =&~\rho_{\eps\in b}(u(h(x)))\oplus w^\klstar(T(\id\times\lambda a.\, b/a) T_A h(f(x)))\\
 =&~\rho_{\eps\in b}(u(h(x)))\oplus (w\comp h)^\klstar(T(\id\times\lambda a.\, b/a) f(x))\\
 =&~(GU(w))(x)(b)
\end{align*} 
from which we conclude $f^B_{u\comp h}=\mu G=U\mu F=f^B_{u}\comp h$ and thus finish the proof.
\qed

\subsection*{Proof of Theorem~\ref{thm:red}}
For one thing, $E$ is a bisimulation for $f^B_{\id}$ if $E$ is the
kernel of a $T^B$-coalgebra morphism $h:X\to Y$. In this case
$g\comp h=T^B h\comp f^B_{\id}$ and by Lemma~\ref{lem:rec_tail}
$g\comp h=f^B_{h}$, which certainly implies $E=\Ke h\subseteq \Ke
f^B_{h}$.

Let us show the converse implication. Suppose, $E$ is a
$B$-bisimulation, i.e.\ $E\subseteq \Ke f^B_{\pi}$. We can now turn
$Y=X/E$ into a $T^B$-coalgebra as follows $g:Y\to T^B Y$: for every
$[x]_E\in Y$ let $g([x]_E)=f^B_{\pi}(x)$ --- this is correct
because, by assumption $f^B_{\pi}$ does not distinguish
$E$-equivalent elements. By construction we have $g\comp\pi=f^B_{\pi}$. The rest of the argument follows from diagram
\begin{align*}
\xymatrix{
  X \ar[r]^{\pi}\ar[rd]^{f^B_{\pi}} \ar[d]_{f^B_{\id}} & X/E \ar[d]^{g} \\
	T^B X \ar[r]_{T^B \pi} & T^B (X/E)
}
\end{align*}
whose commutativity follows from Lemma~\ref{lem:rec_tail}.
\qed

\subsection*{Proof of Lemma~\ref{lem:conalg}}
Observe that the Kleisli category of the continuation monad over $D$ can be equivalently viewed as a category whose objects are sets and whose morphisms from $X\to Y$ are functions $D^Y\to D^X$ under standard composition, but in the opposite direction. The corresponding subcategory consisting only of continuous maps $D^Y\to_c D^X$ can be viewed as a Kleisli category, for $D^Y\to_c D^X$ is isomorphic to $X\to TY$. This induces a monad structure over $T$. By definition, $\BBT$ is completely ordered: Kleisli hom-sets $\Hom_{\Set_{\BBT}}(X,Y)\cong(D^Y\to_c D^X)$, ordered pointwise inherit the pointed dcpo structure from $D$; this structure is preserved by Kleisli composition by definition.

Let us show that any continuous operation of $\BBT$ is algebraic. 
For simplicity we only consider the case $n=2$. Let $\oplus$ be the
operation $T^2\to T$ of interest. In order to show that $\oplus$  is
algebraic it suffices to show that for any $f,g:(X\to D)\to_c D$ and
$h:X\to ((Y\to D)\to_c D)$, $h^{\klstar}(f\oplus
g)=h^{\klstar}(f)\oplus h^{\klstar}(g)$. Indeed, we have
\begin{align*} 
h^{\klstar}(f\oplus g)(c:Y\to D) =&~(f\oplus g)(\lambda x.\,h(x)(c))\\
=&~f(\lambda x.\,h(x)(c))\oplus f(\lambda x.\,h(x)(c))\\ 
=&~h^{\klstar}(f)(c)\oplus h^{\klstar}(g)(c),
\end{align*} 
which completes the proof.  \qed

\subsection*{Proof of Lemma \ref{lem:ext}}
%
Let us denote by $\hat f$ the composition $\kappa\comp f$. We would
like to show that $\hat f^B_{\pi}= \kappa^B\comp f^B_{\pi}$. This
would automatically imply the claim, for by injectivity of $\kappa$
it would mean that $\Ke f^B_{\pi}=\Ke \hat f^B_{\pi}$, and hence, by
definition, $R$ would be a $B$-bisimulation on $(X,f)$ exactly when
it would be a $B$-bisimulation on $(X,\hat f)$.

As in the proof of Lemma~\ref{lem:rec_tail} we use the uniformity principle. Specifically, we conclude $U\mu F=\mu G$ from $U(\bot)=\bot$ and $UF=GU$ where we take $U(g)=\kappa^B\comp g$, $F$ to be the map~\eqref{eq:func}, and $G$ to be of the same form as $F$ but with $f$ replaced with $\kappa\comp f$. 

Since $\kappa$ is a monad morphism, we certainly have $U(\bot)=\bot$. Then we verify $UF=GU$ as follows:
\begin{align*}
(UF(g))(x)(b) =&~\kappa^B\comp(\rho_{\eps\in b}(\pi(x))\oplus g^\klstar(T(\id\times\lambda a.\, b/a) f(x)))\\
=&~\rho_{\eps\in b}(\pi(x))\oplus \kappa^B g^\klstar(T(\id\times\lambda a.\, b/a) f(x))\\
=&~\rho_{\eps\in b}(\pi(x))\oplus (\kappa\comp g)^\klstar(T(\id\times\lambda a.\, b/a) f(x))\\
=&~(GU(g))(x)(b)
\end{align*}
where we made an essential use of the facts that $\kappa$ was a monad morphism coherently preserving $\oplus$.
\qed
%

\subsection*{Proof of Theorem~\ref{cor:semi-strong}}
Theorem~\ref{cor:semi-strong} follows from the following more general result.
\begin{lem}\label{thm:cont}
Let $\BBT$ be a completely ordered monad for which the following
condition holds: given any $X$, there is a jointly monic family of
the form $\{c_i^{\klstar}:TX\to T1\}_i$.
Then $E\subseteq X\times X$ is a $B$-bisimulation on a 
coalgebra $(X,f:X\to T (X \times A))$ iff it is a kernel bisimulation on $(X,
\kappa_X \circ  f:X\to\widehat T(X \times A))$ where $\widehat
TX=(X\to T1)\to_c T1$
and $\kappa_X$ is the natural transformation $\kappa_X:TX\to (X\to
T1)\to T1$ defined as $\kappa_X(p)(c) = c^\klstar(p)$ and $\oplus$
is defined on $\widehat\BBT$ by equation $(f\oplus g)(c)=f(c)\oplus
g(c)$.  \end{lem}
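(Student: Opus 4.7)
The plan is to derive Lemma~\ref{thm:cont} by combining Lemma~\ref{lem:ext} with Theorem~\ref{thm:red}. Lemma~\ref{lem:ext} will be used to transport $B$-$\oplus$-bisimulations from $(X,f)$ to $B$-$\widehat\oplus$-bisimulations on the embedded coalgebra $(X,\kappa_X\circ f)$, after which Theorem~\ref{thm:red} --- applicable because the pointwise $\widehat\oplus$ is algebraic on the continuous continuation monad $\widehat\BBT$ by Lemma~\ref{lem:conalg} --- identifies the latter with a kernel bisimulation on the associated system $(X,(\kappa_X\circ f)^B_{\id})$.

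First I would verify the hypotheses of Lemma~\ref{lem:ext}. The map $\kappa_X(p)(c) = c^\klstar(p)$ is a monad morphism $\kappa\colon\BBT\to\widehat\BBT$: unit-preservation reduces to $c^\klstar\eta = c$, and preservation of Kleisli extension to the associativity law $(f^\klstar g)^\klstar = f^\klstar g^\klstar$, both being part of the Kleisli triple axioms for $\BBT$. Injectivity of $\kappa_X$ is immediate from the jointly monic family hypothesis: if $\kappa_X(p) = \kappa_X(q)$ then in particular $c_i^\klstar(p) = c_i^\klstar(q)$ for every index $i$ in the distinguished family, forcing $p = q$.

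The hard part will be the coherence condition $\widehat\oplus\circ \kappa^2 = \kappa\circ\oplus$. Unfolding the definition $(f\oplus g)(c) = f(c)\oplus g(c)$ of the pointwise $\widehat\oplus$, it demands $c^\klstar(p\oplus q) = c^\klstar(p)\oplus c^\klstar(q)$ for every $c\colon X\to T1$ and all $p,q\in TX$. This is an algebraic-style compatibility between $\oplus$ and Kleisli extension into the terminal object, strictly weaker than full algebraicity of $\oplus$ on $\BBT$ (otherwise Theorem~\ref{thm:red} would apply directly to $(X,f)$), but it is exactly what the pointwise structure of $\widehat\oplus$ on $\widehat\BBT$ requires to be compatible with $\kappa$. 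This verification, together with a careful reading of how $\oplus$ restricts at the terminal object $1$ to induce $\widehat\oplus$, is the chief technical hurdle.

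Once these hypotheses are in place, Lemma~\ref{lem:ext} yields the equivalence of $B$-$\oplus$-bisimulations on $(X,f)$ with $B$-$\widehat\oplus$-bisimulations on $(X,\kappa_X\circ f)$. Algebraicity of $\widehat\oplus$ (from Lemma~\ref{lem:conalg}) then lets Theorem~\ref{thm:red} identify $B$-$\widehat\oplus$-bisimulations on $(X,\kappa_X\circ f)$ with kernel bisimulations on $(X,(\kappa_X\circ f)^B_{\id})$, completing the proof.
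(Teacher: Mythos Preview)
Your approach is essentially the same as the paper's: verify that $\kappa$ is an injective monad morphism satisfying the coherence law, then invoke Lemma~\ref{lem:ext}, Lemma~\ref{lem:conalg}, and Theorem~\ref{thm:red} in sequence. The paper additionally checks that $\kappa$ is a \emph{completely ordered} monad morphism --- it preserves the pointwise order, directed joins, and $\bot$ --- which is needed because the proof of Lemma~\ref{lem:ext} relies on the uniformity principle and hence on $\kappa$ preserving $\bot$ and continuity; you should add this step. Conversely, you are right to flag the coherence condition $c^\klstar(p\oplus q)=c^\klstar(p)\oplus c^\klstar(q)$ as a genuine obligation: the paper's proof simply asserts the conclusion ``by Lemmas~\ref{lem:ext},~\ref{lem:conalg} and Theorem~\ref{thm:red}'' without isolating this step, so your outline is if anything more scrupulous here. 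Note also that your final target, a kernel bisimulation on $(X,(\kappa_X\circ f)^B_{\id})$, matches Theorem~\ref{cor:semi-strong}; the displayed statement of the lemma omits the superscript $B$ and subscript $\id$, which appears to be a typo.
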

\begin{proof}
The transformation~$\kappa$ is a monad morphism~\cite{Moggi89}. We order $\widehat T$ by putting $f\sqsubseteq g$ for $f,g\in\widehat T X$ iff for all $c:X\to T1$, $f(c)\sqsubseteq g(c)$. Under this definition, $p\sqsubseteq q$ implies $\kappa_X(p)(c)=c^\klstar(p)\sqsubseteq c^\klstar(q)=\kappa_X(q)(c)$ for all $c$ and hence the ordering is preserved by $\kappa$. Continuity of $\kappa$ follows  analogously. Also we define $\bot$ on $\widehat\BBT$ as the constant map $\lambda c.\,\bot$. Clearly, under this definition, $\bot$ is preserved. We have thus shown that $\kappa$ is a completely ordered monad morphism.

By assumption, for any distinct $p,q\in TX$ there exists $c_i$ such that $\kappa(p)(c_i)\neq\kappa(q)(c_i)$, which means that $\kappa$ is componentwise injective. We are done by Lemmas~\ref{lem:ext},~\ref{lem:conalg} and Theorem~\ref{thm:red}.
\qed
\end{proof}
\end{document}
